%% file: main_process_functions.tex
\documentclass[a4paper,onecolumn,11pt,unpublished]{quantumarticle}
\pdfoutput=1

\input{preamble.tex}

\allowdisplaybreaks
\usepackage{mathtools}
\usepackage[utf8]{inputenc}
\usepackage[english]{babel}
\usepackage[T1]{fontenc}
\usepackage{comment}
\usepackage{subcaption}

\newtheorem{theorem}{Theorem}
\newtheorem{lemma}{Lemma}

\theoremstyle{definition}
\newtheorem{definition}{Definition}
\newtheorem{example}{Example}

\theoremstyle{remark}

\pgfsetlayers{background,strings,edgelayer,nodelayer,foreground,main}

\begin{document}

\title{Process matrices on generalised theories}

\author{Matt Wilson}
\author{James Hefford}
\author{Timothée Hoffreumon}
\affiliation{Université Paris-Saclay, CNRS, ENS Paris-Saclay, Inria, CentraleSupélec, Laboratoire Méthodes Formelles}

\maketitle

\begin{abstract}
We extend the process-matrix framework for defining indefinite causal orders to a broad class of generalised physical theories. 
We give a simple re-axiomatisation of spacetime contexts in terms of naturality over classical probes, which precisely captures non-contextual assignment together with compatibility with convexity and coarse-graining within local laboratories. 
Using this compositional perspective, we give a clean characterisation of theories in which abstract spacetime contexts admit a process-matrix representation, and we pin down the precise conditions (satisfied by classical theory, quantum theory, and a Boxworld-like theory) under which the supermap and process-matrix perspectives on higher-order processes coincide. 
\end{abstract}



Inspired by the development of the circuit formalism for quantum computation \cite{deutsch1989quantum} and Bell's theorem \cite{bells_thm}, two core concepts within modern quantum foundations are the compositional structure of systems and processes, and the observable correlations that such systems and processes can produce.
This can be seen both in the prominence of the study of generalised (and operational) probabilistic theories \cite{chiribella_informational, barrett_gpts}, which accommodates physical theories with stronger-than-quantum non-local correlations whilst retaining the circuit formalism, and in the study of indefinite causal order \cite{hardy2005probabilitytheoriesdynamiccausal, chiribella_switch, oreshkov, costa2026indefinitequantumcausality} in terms of higher-order quantum operations \cite{taranto2025higherorderquantumoperations}, which accommodates causal correlations without a definite causal order, such as those given by quantum superpositions of circuit structures, whilst retaining local compatibility with quantum theory. 


For beyond-quantum systems, it is possible, for instance, to remain compatible with the circuit formalism (and the basic no-superluminal signalling principle of special relativity) whilst producing even stronger correlations than those which arise from quantum entanglement using general no-signalling boxes \cite{Popescu:1994aa}. For compatibility with local quantum theory beyond the circuit framework, it is possible to find physically consistent arrangements of events termed multipartite causal correlations \cite{PhysRevA.94.032131} which violate a generalisation of Bell inequalities to causal inequalities, such as the OCB process \cite{oreshkov}, the Lugano process \cite{baumeler_incompatibility}, and the Grenoble process \cite{Wechs_2021}. These two generalisations lead quite naturally to a foundational question: what are the most general causal correlations within conceivable physical theories?

The problem of moving the study of indefinite causal order beyond classical and quantum theory has recently gained some attention.
A number of very recent works aim to directly put forward either the definition of higher-order processes on specific physical theories such as Boxworld \cite{bavaresco2024indefinitecausalorderboxworld, chen2024causalitydualitymultipartitegeneralized}, Real quantum theory \cite{surace2026indefinitecausalorderreverses}, and Time-Symmetric quantum theory \cite{mrini2024indefinitecausalstructurecausal}, or to focus directly on specific candidate higher-order processes on specific physical theories such as Hex-Square theory \cite{sengupta2024achievingmaximalcausalindefiniteness}.
These recent works are complementary to the development of more general frameworks for defining higher-order processes over broad classes of theories, such as the construction of higher-order causal categories \cite{kissinger_caus,simmons_completelogic,SimmonsKissinger2022} from pre-causal categories, and the more general construction of categorical supermaps \cite{wilson_locality, hefford_supermaps, hefford2025bvcategoryspacetimeinterventions} from general circuit models. When the circuit model at hand is pre-causal, the two approaches coincide \cite{wilson2026higherorderquantumobjectsstrong}, and for non-pre-causal models such as Boxworld, the adaptation of the Yoneda lemma \cite{maclane1998categories} to categorical supermaps \cite{wilson2026supermapsgeneralisedtheories} relates categorical supermaps to concrete approaches in the literature.

These general frameworks have however, aside from a recent minimal correlational principle put forward within broader classes of theories \cite{liu2025parityerasurefoundationalprinciple}, generally focused on the first of two alternative perspectives and respective axiomatisations for higher-order quantum processes. The former perspective is based on the completeness principle for transformations, in this context referred to as quantum supermaps \cite{chiribella_supermaps}. The latter is instead based on non-contextual assignment from instruments to probability distributions, in this context referred to as process matrices \cite{oreshkov}. 

This leaves open a number of foundational questions concerning the limits of causation beyond quantum mechanics: \textit{What is a process functional over a general physical theory? Under what circumstances do the process functionals on a theory admit an analogue of the process matrix representation? Under what conditions do supermaps and process matrices on a generalised theory coincide?}

In this paper, we address these questions in order to better understand the structure of causality across the spectrum of conceivable physical theories.

\paragraph{Results}
In this paper we develop a definition for process functionals on general theories (non-contextual assignments from instruments to probability distributions respecting normalisation, coarse-graining, and convexity), and prove equivalence to local-applicability (strong naturality) with respect to classical-variable environments. 
\begin{theorem}
On any generalised theory $\cat{C}$:
\[ \{\textrm{Process functionals} \} \cong \{ \textrm{Strong Natural Transformations on Classical Variables} \}. \]
\end{theorem}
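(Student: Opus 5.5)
The plan is to build explicit maps in both directions between the two sets and check that they are mutually inverse. Fix parties $A_1,\dots,A_n$ with local input/output system pairs $(A_i^{I},A_i^{O})$.

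\textbf{From process functionals to natural transformations.} A process functional $w$ assigns to each tuple of local instruments $\{M_i^{a_i|x_i}\}$ a probability distribution $p(a_1\dots a_n|x_1\dots x_n)$. It must respect normalisation, coarse-graining of outcomes, and convexity. It must also be non-contextual: the value on a given tuple of effects does not depend on which instrument the effect is embedded in.

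A classical-variable environment is a process in $\cat{C}$ whose extra wires are classical systems $X_i$ (settings) and $A_i$ (outcomes). Such a process is a family of sub-normalised local processes indexed by classical labels. Given $w$, define the component of a transformation $\eta$ at classical environments $X_i \to A_i$ by
\[
\eta(\{f_i\}) := \sum_{a,x} w\big(\{f_i^{a_i|x_i}\}\big)\, \ket{a}\bra{x}.
\]
Here the local families $f_i$ are read as instruments once the classical data are decomposed into basis elements. This decomposition is available because classical systems in a generalised theory are, by assumption, biproduct-like (copyable, with discrete bases).

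\textbf{Checking strong naturality.} There are two parts. First, naturality in the classical environment: pre-composition with classical channels on the $X_i$, and post-composition on the $A_i$. This should follow from linearity in the classical labels, which is exactly what coarse-graining and convexity give. Second, the strength condition, meaning compatibility with an extra classical side wire. This reduces to linearity of $w$ in the instrument, together with non-contextuality, so that relabelling or merging outcomes does not change the assigned value.

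\textbf{From natural transformations to process functionals.} Restrict $\eta$ to instruments, viewed as processes with a classical outcome output and a trivial setting input, and read off probabilities. Normalisation comes from naturality with respect to discarding. Coarse-graining comes from naturality with respect to classical post-processing $A\to A'$. Convexity comes from naturality with respect to classical stochastic maps on the settings. Non-contextuality comes from strength: two instruments sharing an effect $f^{a}$ can be related by a classical post-processing that fixes outcome $a$, so naturality forces equal values on $f^{a}$.

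\textbf{Mutual inverseness and the main obstacle.} The two constructions are inverse because every classical environment decomposes as a sum over basis labels, and every natural transformation is determined by its values on such decompositions (a Yoneda-style argument over classical probes).

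The step I expect to be hardest is showing that strong naturality alone forces non-contextuality on general effects, not just on effects inside instruments that are already comparable. The approach I would try first:
\begin{itemize}
\item Given two instruments sharing an effect, build a joint classical environment that flags which instrument is used via a classical control bit.
\item Use the strength with respect to that control wire to equate the two values.
\end{itemize}
This requires that classical control, meaning conditioning on a classical bit, is available in $\cat{C}$. I expect to need that as an explicit assumption on ``generalised theory''.
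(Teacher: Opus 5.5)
There is a genuine gap in the non-contextuality step of the converse direction. This is also the one step the paper does not prove itself: it imports the non-contextual assignment lemma of \cite{wilson2026supermapsgeneralisedtheories}. Your main argument is that two instruments sharing an element $E$ are related by a classical post-processing that fixes that outcome. This fails in general. Every classical comb applied to a setting-free instrument $K$ preserves the deterministic process $\sum_y K^y$ at every setting. So in quantum theory $\{E,\Phi-E\}$ and $\{E,\Psi-E\}$ with channels $\Phi\neq\Psi$ are never related by combs, directly or through a common image. (The argument only works when the totals coincide, e.g.\ for destructive measurements in a causal theory.) Your fallback does not close the gap either. For a controlled instrument $L$ with $L\circ\delta_c=K_c$, naturality and strength on the control wire give only $W(L)(\cdot|c)=W(K_c)$, which never compares the two values.

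The assumption you propose to add, classical control in $\cat{C}$, is already part of the definition of a generalised theory, and it is not the relevant one. What matters is that $\mathcal{M}$ itself is closed under control; without this, strong naturality cannot force non-contextuality. For instance, on a qubit let $\mathcal{M}(\mathbb{X},\mathbb{X}')$ consist of the instruments whose elements are $K_x^y=\alpha_x^y\,\mathrm{id}+\beta_x^y\,\mathcal{Z}$. Here $\mathcal{Z}(\rho)=Z\rho Z$, the coefficients $\alpha,\beta\geq0$ are uniquely determined, and $b:=\sum_y\beta_x^y<1$ is independent of $x$. This set is convex and closed under all classical combs, which preserve $b$. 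The assignment $W(K)(y|x):=\alpha_x^y/(1-b)$ commutes with every classical comb. Yet it assigns $\tfrac12$ to the element $\tfrac12\mathrm{id}$ of $\{\tfrac12\mathrm{id},\tfrac12\mathrm{id}\}$ and $1$ to the same element of $\{\tfrac12\mathrm{id},\tfrac12\mathcal{Z}\}$. So some closure hypothesis on $\mathcal{M}$, satisfied by the full instrument set, is genuinely needed.

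The missing idea, given closure of $\mathcal{M}$ under control, is to realise one instrument as two different mixtures.
By naturality, reduce to setting-free two-outcome instruments $\hat M=(E,F)$ and $\hat N=(E,G)$, with $E$ at outcome $e$: evaluate at the relevant settings and coarse-grain the other outcomes. Then relabel by pure functions into four outcomes: $\hat M_1=(E,F,0,0)$, $\hat N_1=(0,0,E,G)$, $\hat M_2=(0,F,E,0)$, $\hat N_2=(E,0,0,G)$. Control gives two-setting instruments $L=(\hat M_1,\hat N_1)$ and $L'=(\hat M_2,\hat N_2)$ in $\mathcal{M}$. Precomposing with the uniform distribution $u$ gives literally the same instrument, $L\circ u=L'\circ u=(\tfrac12E,\tfrac12F,\tfrac12E,\tfrac12G)$. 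By naturality, $W(L\circ u)(1)=\tfrac12W(\hat M)(e)+0$ and $W(L'\circ u)(1)=0+\tfrac12W(\hat N)(e)$, hence $W(\hat M)(e)=W(\hat N)(e)$.

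The rest of your plan is sound and close to the paper's. In the converse, normalisation is automatic (values lie in $\stoch$), coarse-graining comes from pure post-processing, and convexity comes from precomposition with distributions once non-contextuality holds. In the forward direction the only loose point concerns post-processing by a non-deterministic stochastic map. That step needs homogeneity $W(\lambda E)=\lambda W(E)$, which you should derive from convexity against a zero element together with $W(0)=0$. The paper instead avoids this by writing the map as a deterministic function controlled by a random classical input.
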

This forms a direct analogue to the categorical supermaps program, which axiomatises supermaps through local applicability (strong naturality) with respect to general physical systems. 
\begin{figure}[htbp]
\centering

\begin{subfigure}{0.45\textwidth}
\centering
 \input{figs/intro_compare.tikz}
\caption{}
\label{fig:intro-compare-a}
\end{subfigure}
\hfill
\begin{subfigure}{0.45\textwidth}
\centering
 \input{figs/intro_compare_2.tikz}
\caption{}
\label{fig:intro-compare-b}
\end{subfigure}

\caption{Two alternative perspectives on higher-order operations: (a) a supermap is a higher-order process that can be locally applied to parts of lower-order processes; (b) a process matrix is an assignment that can be locally applied to the physical systems of instruments, treating the classical systems of an instrument as the outside environment.}
\label{fig:intro-perspectives}
\end{figure}

We then focus our attention on generalised theories with channel-state duality (an analogue of the Choi-Jamiolkowski isomorphism) \cite{Choi_1975, Jamiolkowski_1972}, characterising precisely those theories on which abstract process functionals support a process-matrix representation. More precisely, we show that a generalised theory supports a natural interpretation of the Busch-Gleason axiom if and only if there is a process-matrix representation theorem for multipartite process functionals on that theory. 
\begin{theorem}
Consider any generalised theory $\cat{C}$ with channel-state duality, the following are equivalent:
\begin{itemize}
\item The Busch-Gleason axiom
\item Every multipartite abstract process functional has a process-matrix representation:  \[   \input{figs/process_thm_1.tikz}  \ = \   \input{figs/process_thm_2.tikz}    \]
\end{itemize}
\end{theorem}

Finally, we prove that a generalised theory is furthermore tomographically closed (meaning that every process which produces normalised conditional distributions under tomography is a deterministic process) if and only if supermaps are equivalent to process matrices on that theory:
\begin{theorem}
Consider a generalised theory $\cat{C}$ with channel-state duality:
\[ \textrm{ $\cat{C}$ satisfies Busch-Gleason Axiom + Tomographic Closure}  \]
\[  \iff  \]
\[  \{ \textrm{Multipartite Process Functionals} \}  \cong  \{ \textrm{Multi-input Supermaps} \}  \]
\end{theorem}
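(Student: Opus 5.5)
We split the equivalence into two directions, using the previous theorem as a bridge. Under channel-state duality plus the Busch-Gleason axiom, every multipartite process functional is given by pairing with a state-like object $W$ on the joint input/output systems (the process matrix). Multi-input supermaps are, by the categorical supermaps program and the Yoneda-type results cited in the introduction, the strongly natural transformations over \emph{all} systems. We can then compare both sides directly through their representing objects $W$.

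\textbf{($\Rightarrow$).} First we construct a map from supermaps to process functionals. Given a multi-input supermap $S$, we restrict its strong naturality from arbitrary environments to classical-variable environments. By the first theorem, this restriction is a process functional, and the map is injective whenever two supermaps agreeing on classical probes are equal. Injectivity should follow from the Busch-Gleason axiom: a supermap's action is determined by its values on instruments, because local tomography over classical outcomes separates processes under channel-state duality.

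For surjectivity, we start from a process functional and use the second theorem to obtain its process matrix $W$. Composing $W$ with the local channels via channel-state duality defines a candidate supermap $S_W$, and strong naturality over arbitrary systems is automatic from the diagrammatic form. The remaining issue is that $S_W$ must send deterministic (normalised) processes, possibly with extended environments, to deterministic processes. Probing with classical instruments shows that the output produces normalised conditional distributions under tomography. Tomographic closure then upgrades this to the statement that the output is deterministic.

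\textbf{($\Leftarrow$).} Assume the bijection holds. For Busch-Gleason, we want every multipartite process functional to have a process-matrix form. Each process functional corresponds to a supermap, and a supermap has a representation via its Choi-type object, so it has process-matrix form; the second theorem then yields the Busch-Gleason axiom. For tomographic closure, we take a process $f$ that yields normalised distributions under all tomographic probes. We build from $f$ a one-party functional, essentially pairing an instrument with $f$ through channel-state duality, that is normalised on all instruments, so it is a process functional. Its corresponding supermap, applied to the identity or to a suitably chosen deterministic process, returns $f$ itself, which forces $f$ to be deterministic.

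\textbf{Main obstacle.} We expect the hardest step to be the surjectivity/determinism step in ($\Rightarrow$), and dually the encoding of $f$ in ($\Leftarrow$). In both, one must check that normalisation on classical instruments, with arbitrary extended environments, matches exactly the hypothesis of tomographic closure. The delicate point is handling ancillary environments: an arbitrary environment is not classical. We would first try to reduce the case of a general environment $E$ to classical probes by applying an effect to $E$ together with its complement, and then invoke closure.
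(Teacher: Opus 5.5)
Your surjectivity step in ($\Rightarrow$) and your derivation of Busch--Gleason in ($\Leftarrow$) follow the paper's arguments. The environment worry needs no reduction to classical probes. A tomographic probe of $S_W(\phi_1,\dots,\phi_n)$ wrapped around the $\phi_i$ is itself an element of $\mathcal{M}^n$, since multipartite instruments may share a physical environment. So normalisation of $W$ is literally tomographic determinism.

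A smaller point: restricting a supermap's naturality to classical environments yields only a product functional. To get a multipartite one, you must evaluate the supermap with the physical environment wires of an element of $\mathcal{M}^n$. You must then check, via naturality, that the value does not depend on the decomposition, as in the paper's lemma that supermaps are process functionals.

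The real gap in ($\Rightarrow$) is injectivity. The Busch--Gleason axiom only asserts that a representing state exists. Uniqueness is equivalent to tomographic distinguishability, which Busch--Gleason does not give. You also cannot import it as \emph{local tomography}, which is not a hypothesis. It is genuinely needed. Two deterministic states $\rho\neq\sigma$ on $C$ that no destructive measurement separates would be distinct supermaps of type $(C\Rightarrow I)\to(I\Rightarrow I)$, since a CJ-supermap of this type is just a state of $C$. Yet they induce the same functional. This is why the paper's body version of this theorem adds tomographic distinguishability as an extra hypothesis. Injectivity there comes from the uniqueness clause of the representation theorem.

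In ($\Leftarrow$), your one-party encoding of a tomographically deterministic $f:A'\to B$ fails. Pairing a single lab with $f$ through channel-state duality gives a functional of type $(B\Rightarrow A')$ in which $f$ carries the lab's output back to its input: a causal loop. This is not normalised even for $f=\mathrm{id}$. In quantum theory, a lab applying a bit flip is assigned total probability $0$. So you never obtain a process functional to which the hypothesised bijection applies.

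The paper instead uses a two-party, definite-order process matrix of type $(I\Rightarrow A')\otimes(B\Rightarrow I)$. The first party prepares the input of $f$ and the second measures its output. Any element of $\mathcal{M}^2$ wrapped around $f$, with the shared environment acting as an ancilla, is then a tomographic probe of $f$. Tomographic determinism is exactly what makes this a process matrix. The hypothesis then makes $f$ a supermap of type $(I\Rightarrow A')\times(B\Rightarrow I)\to(I\Rightarrow I)$. Plugging in the swaps with the unit object returns $f$ itself, which is therefore deterministic.
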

We further discuss how these results apply to well-studied theories in the literature, namely, classical theory, quantum theory, and a Boxworld-like theory.

\section{Process functionals on general theories}
In the following, we frame our discussion within the context of generalised theories \cite{wilson2026supermapsgeneralisedtheories}: theories which build on the process-theoretic formalism \cite{coecke_kissinger_2017} supporting sequential composition and parallel composition, by adding an interface to the classical world in a similar vein to \cite{gogioso_cpt} and \cite{Selby2021reconstructing}. In this sense, generalised theories are an extension to operational probabilistic theories \cite{chiribella_purification} in which classical variables are given an explicit compositional interpretation in terms of wires.
\begin{definition}[Generalised Theories]
    A generalised theory is a convex enriched \cite{kelly, fritz2015convex} symmetric monoidal category $\cat{C}$, representing the non-deterministic physical events, and equipped with:
    \begin{itemize}
        \item a specified wide convex subcategory $\cat{C}^{\mathbf{d}} \hookrightarrow  \cat{C}$, representing the deterministic physical events,
        \item a full and faithful embedding $\stoch \hookrightarrow \cat{C}^{\mathbf{d}}$,
        \item a full and faithful embedding $\mat{\reals^+} \hookrightarrow \cat{C}$,
    \end{itemize}
    such that the following diagram commutes:
    \xsavebox{cground}{$\cat{C}^{\mathbf{d}}$}
    \[
    \begin{tikzcd}
        \stoch \arrow[r] \arrow[d] & \xusebox{cground} \arrow[d] \\
        \mat{\reals^+} \arrow[r] & \cat{C}
    \end{tikzcd}
    \]
    and such that all processes can be classically controlled. That is, for any family $\{M_{i}: A\rightarrow A'\}_i$ of deterministic processes (meaning the morphisms of $\cat{C}^{\mathbf{d}}$), there exists a deterministic physical process $M: A\otimes \mathbb{I}  \rightarrow A'$ and a deterministic classical process $i: I \rightarrow \mathbb{I}$ such that
    \begin{equation}\label{eq:control}
    	M \circ (1_A\otimes i)  = M_i \:.
    \end{equation}
\end{definition}
In practice, generalised theories are simply theories which support a well-behaved classical interface. As a practical guide to how such theories can be used, we show a few pictures that can be written within generalised theories and give their operational interpretation. 
\begin{align*}
& \texttt{Destructive Measurement } && \input{figs/outline_1.tikz}  \quad  &&&  \texttt{State Preparation} &&&& \input{figs/outline_2.tikz} \\
& \texttt{Non-destructive Measurement } && \input{figs/outline_3.tikz}  \quad  &&&  \texttt{Classical Process} &&&& \input{figs/outline_7.tikz} \\
& \texttt{Conditional Measurement} && \input{figs/outline_5.tikz}  \quad  &&&  \texttt{Physical Process} &&&& \input{figs/outline_6.tikz} \\
& \texttt{Controlled Process} && \input{figs/outline_4.tikz}  \quad  &&&   \texttt{Probability Distribution} &&&& \input{figs/outline_8.tikz} 
\end{align*}
Convex enrichment also allows one to write convex and general additive combinations of diagrams.
Finally, we can refer to the $M_x^y$ component of a conditional non-destructive measurement (which we will generally refer to as a parameterised instrument) by simply writing
\[  M_{x}^y \ = \ \input{figs/param_inst2.tikz}.  \]
This represents choosing instrument $x$ and post-selecting on outcome $y$.

The canonical example of a generalised theory is the theory of quantum instruments, which compositionally can be expressed as the theory of quantum-classical processes. One can also consider, as a more algebraic example, the theory of quantum channels between $C^*$-algebras.
\begin{example}[Classical theory]
In classical theory we simply take $\cat{C}^{\mathbf{d}}  = \stoch$ and $\cat{C} = \mat{\reals^+}$. A deterministic process $f:n_A \rightarrow n_B$ is an $n_B \times n_A$ stochastic matrix, a deterministic state $\rho : I \rightarrow A$ is an $n_A$-dimensional probability vector, and a deterministic effect $E: A \rightarrow I$ is the transpose of the $n_A$ vector containing $1$'s in every entry.
\end{example}
\begin{example}[Quantum theory]
In quantum theory, we take $\cat{C}^{\mathbf{d}} $ to be the category of hybrid classical-quantum channels, that is, a morphism of type $(A   , n ) \rightarrow (B , m) $ with $n, m$ natural numbers is taken to be a function from elements of $[1, \dots ,n]$ to $m$-component instruments. More precisely, a morphism of type $(A,  n) \rightarrow (B, m) $ is taken to be a family $\{M_x^{y}\}_{x,y}$ of completely positive maps $M_x^y:A\rightarrow B$ such that for each fixed $x$ (often called a \textit{setting}) the family $\{M_x^{y}\}_{y}$ is a quantum instrument, meaning that $\sum_y M^y_x$ is a CPTP map. 
We take $\cat{C}$ to be the category of hybrid quantum-classical maps, meaning families $\{M_x^{y}\}_{x,y}$ of completely positive maps without any trace-preservation requirement. 
Sequential composition of $M: (A ,   n) \rightarrow (B,  m)$ with $N: (B,  m) \rightarrow (C ,  q)$ is given by the family $(N \circ M)_x^z = \sum_y N_y^{z} \circ M_x^{y}$. 
The parallel composition of systems is given by $(A ,  n) \otimes (B ,  m) := (A \otimes B ,n m)$.
The parallel composition of hybrid classical-quantum maps is given by $(N \otimes M)_{x,x'}^{y,y'} = N_x^{y} \otimes M_{x'}^{y'}$. The monoidal unit is given by $(\mathbb{C} ,1)$.
Finally, the classical systems are given by those of the form $(\mathbb{I} ,n)$, since morphisms of type $(\mathbb{I} ,n) \rightarrow (\mathbb{I} ,m)$ are then matrices of scalars $M_x^y :  \mathbb{I} \rightarrow \mathbb{I}$, thus matrices valued (up to isomorphism) in the non-negative reals. In the deterministic case, these matrices come with the extra requirement that $\sum_y M_x^y = 1$, which indeed is the requirement that the matrix be stochastic. 
\end{example}
\begin{example}[The theory of classical boxes]
The theory of classical boxes, termed $\mathbf{Box}$, is a Boxworld-like theory exhibiting the most general possible bipartite correlations. Whilst in previous works \cite{bavaresco2024indefinitecausalorderboxworld, wilson2026supermapsgeneralisedtheories}, including one by the authors of this paper, the theory $\mathbf{Box}$ has been literally referred to as \textit{Boxworld} - we choose here to avoid that term in order to disambiguate from the traditional definition of Boxworld in the literature \cite{barrett_gpts}. In $\mathbf{Box}$, objects are taken to be the symbols $\otimes_i (a_i \rightarrow a_i')$ with $a_i, a_i'$ positive integers, the morphisms of type $\otimes_i (a_i \rightarrow a_i') \rightarrow  \otimes_i (b_i \rightarrow b_i')$ are taken to be the real non-negative tensors (equivalently matrices, but referred to as tensors to disambiguate from our interpretation of states and matrices) $T_{b_1 \dots b_m a_1' \dots a_n'  }^{b_1'\dots b_m' a_1 \dots a_n}$. The deterministic morphisms are taken to be those tensors which furthermore preserve the no-signalling channels, meaning that, the contraction $T_{b_1 \dots b_m a_1' \dots a_n'  }^{b_1'\dots b_m' a_1 \dots a_n} M_{a_1 \dots a_n}^{a_1' \dots a_n'}$ with any non-signalling channel $M_{a_1 \dots a_n}^{a_1' \dots a_n'}$ returns a new non-signalling channel (i.e., an element of $\otimes_i (b_i \rightarrow b_i')$). It is easy to see from this definition that the states of this theory are indeed isomorphic to the classical non-signalling channels. The embedding of classical theory into Boxworld then works as follows, each classical system $a$ is identified with the Boxworld system $(1 \rightarrow a)$. It is easy to check that the non-deterministic morphisms $(1 \rightarrow a) \rightarrow (1 \rightarrow b)$ are automatically positive real matrices of type $a \rightarrow b$ and that the deterministic morphisms of type $(1 \rightarrow a) \rightarrow (1 \rightarrow b)$ are precisely the stochastic matrices of type $a \rightarrow b$. 
\end{example}

A signature of finite dimensionality in generalised theories is the existence of compact closure. A compact closed category is a symmetric monoidal category equipped with cups and caps \cite{abramsky_coecke}, that is, preferred processes $\cup: I \rightarrow A \otimes A^*$ and $\cap: A^* \otimes A \rightarrow I$ satisfying the snake equation \[  \input{figs/snake_1.tikz} =  \input{figs/snake_2.tikz}  . \] We will be interested in generalised theories in which $\mathcal{C}$ is compact closed and furthermore in which the cups and caps occur as instrument elements up to a scalar. We refer to such generalised theories, which have in essence \textit{physical} cups and caps, as generalised theories with channel-state duality \cite{wilson2026supermapsgeneralisedtheories}.

\begin{definition}[Channel-State Duality]
A generalised theory has channel-state duality if its non-deterministic theory $\cat{C}$ is a compact closed category such that:
\begin{itemize}
\item The cup $\cup$ is an instrument element up to a scalar \[  \input{figs/cup.tikz} \ = \ \input{figs/cup_instrument.tikz}\]
\item The cap $\cap$ is an instrument element up to a scalar  \[ \input{figs/cap.tikz} \ = \ \input{figs/cap_instrument.tikz}\]
\end{itemize}
\end{definition}
Classical theory, quantum theory, and Boxworld are each generalised theories with channel-state duality. 

\subsection{Process functionals}
An assignment from parameterised instruments to probability distributions represents the possibility for the spacetime environment to be probed by a classical interface, or equivalently, the possibility for a spacetime environment to be applied in the context of an environment consisting of classical systems. In the following intuitive picture, and later formally, we will denote classical variables by red wires.
\begin{figure}[htbp]
\centering

\input{figs/s_vs_p_2b.tikz}

\caption{A process functional can be applied in the context of any classical environment.}
\label{fig:classical-environment}
\end{figure}

In the simplest possible terms, the direct interpretation of a process functional is that it takes an instrument $\{M_x^y\}_x^y$ and returns a conditional probability distribution (stochastic channel), $ W_{\{ M_x^y \}_x^y }(y \vert x)$. We will denote $ W_{\{ M_x^y \}_x^y }(y \vert x)$ with the shorthand $ W_{M }(y \vert x)$ whenever the meaning is clear. 
In the spirit of Gleason's theorem for POVMs, we refer to the probability assignment as non-contextual if $M_a^b = N_{a'}^{b'} \implies W_{M}(b|a) = W_{N}(b'|a')$, so that we may write $W(M_a^b) := W_{M}(b|a)$.


In principle, we may define process functionals over general instrument sets. We will however choose to fix our attention to instrument sets which are closed under coarse-graining and convex combinations, where coarse-graining is defined by straightforward addition of sub-normalised processes. Noting that coarse-graining changes the outcome set of an instrument, it is more convenient to work with entire assignments $\mathcal{M}(-,=)$ from input/output sets such as $\mathbb{X},\mathbb{X}'$ to sets of instruments $\mathcal{M} \subseteq \{ M: A \otimes \mathbb{X} \rightarrow A' \otimes \mathbb{X}'   \}$. We will refer to such assignments, closed under convex combinations and coarse-graining, as simply \textit{sets of parameterised instruments}. In generalised theories, the set of all instruments $(A \Rightarrow_{inst} A')(-,=)$ defined by $(A \Rightarrow_{inst} A')(\mathbb{X},\mathbb{X}') :=  \{ M: A \otimes \mathbb{X} \rightarrow A' \otimes \mathbb{X}'   \}$ is always closed under convex combinations and coarse-graining. 

In the spirit of the Busch-Gleason theorem for POVMs \cite{Busch_2003}, we refer to a probability assignment as non-contextual if $M_a^b = N_{a'}^{b'} \implies W_{M}(b|a) = W_{N}(b'|a')$, so that we may write $W(M_a^b) := W_{M}(b|a)$. We can then refer to the following three desirable properties of causality, convexity, and coarse-graining, respectively, to define well-behavedness for such an assignment. 
\begin{definition}[Abstract Process Functional]
    A process functional on a set of parameterised instruments $\mathcal{M}(-,=)$ is a non-contextual assignment $W_{M}(y|x) := W(M_x^y) \in \mathbb{R}^{+}$ on the elements of instruments in $\mathcal{M}$ which is
    \begin{itemize}
\item Causal: $\sum_{y} W(M_x^y) = 1$,
\item Convex: $W(pM_{x_1}^{y} + (1-p)M_{x_2}^{y}) = pW(M_{x_1}^{y} ) + (1-p)W(M_{x_2}^{y} )$,
\item Compatible with coarse-graining: $W(M_a^{b_1} + M_a^{b_2}) = W(M_{a}^{b_1}) + W(M_{a}^{b_2})$.
\end{itemize}
\end{definition}
The latter condition on compatibility with coarse-graining is actually entailed by the first condition regarding normalisation; however, we include it for completeness to mirror the standard approach to motivating process matrices. To see this, note that for any instrument with components $\{ M^{y_i} \}$ we can coarse-grain two of its elements to construct a new instrument. For readability, we consider the coarse-graining over $y_1$ and $y_2$, which returns an instrument that for all $i>2$ has components identical to those of the original instrument, but in which $y_1$ and $y_2$ are replaced by a single element $y$ with associated component $M^{y} = M^{y_1} + M^{y_2}$. By normalisation, we then have:
\[   W(M^{y}) + \sum_{i>2} W(M^{y_i}) = 1 = W(M^{y_1}) + W(M^{y_2}) + \sum_{i>2} W(M^{y_i}) \] 
which entails that
\[ W(M^{y})  = W(M^{y_1}) + W(M^{y_2}) . \]

One can take compatibility with coarse-graining further, to the notion of compatibility with pre- and post-composition with pure functions. Let us define, for any function $g: Y \rightarrow Y'$, the induced instrument $M^{g^{-1}}$ by \[ (M^{g^{-1}})_x^{y'} := \sum_{k \in g^{-1}(y')}M_x^{k}  \] then we see that (using the derived compatibility with coarse-graining)
\begin{align}
 W_{M^{g^{-1}}}(y'|x) & = W((M^{g^{-1}})_x^{y'} )  \\ 
 & = W\bigg(\sum_{k \in g^{-1}(y')}M_x^{k}\bigg)  \\
 & = \sum_{k \in g^{-1}(y')} W(M_x^{k}) \\
 & = \sum_{k \in g^{-1}(y')} W_{M}(k|x) \\
 &   =  (g^{-1} \circ W_{M})(y'|x) .
\end{align}
For any function $f:X \rightarrow X'$ we can define $M_f$ by \[ (M_f)_{x}^{y} :=  M_{f(x)}^{y} .  \] We then see immediately that, in fact,
\[  W_{M_f}(y|x) =  W((M_f)_{x}^{y}) =  W(M_{f(x)}^{y})  =  W_{M}(y|f(x)) = (W_M \circ f)(y|x) . \]

\subsection{The problem with product instruments}

Of most interest to the study of indefinite causality and quantum causal structure is the multipartite setting in which a spacetime environment is thought of as being compatible with multiple local quantum laboratories. 
Note that it is easy to generalise process functionals to the product-input setting. We simply require that $W(M_x^{y}, N_{x'}^{y'})$ is convex-linear in each component, compatible with coarse-graining in each input, and normalised in the sense that for all $x,x'$ we have \[\sum_{y,y'} W(M_{x}^{y}, N_{x'}^{y'}) = 1. \] 
A family of functions on products $W(M_{x}^{y}, N_{x'}^{y'} )$ satisfying the above requirements with respect to coarse-graining and convexity component-wise has a representation in terms of a linear map $\rho_W$ via a Born rule, meaning that $W(M_{x}^{y}, N_{x'}^{y'} ) = Tr[(M_{x}^{y} \otimes N_{x'}^{y'} ) \rho_W ]$. This linear map, however, is not a fully positive operator, but instead a POPT tensor.
\begin{figure}[htbp]
\centering

 \[ \input{figs/s_vs_p_3b.tikz} = Tr[(M_{x}^{y} \otimes N_{x'}^{y'} ) \rho_W ] \]

\caption{Any product process functional on quantum systems has a linear (non-positive) representation in terms of a POPT tensor.}
\label{fig:product-popt}
\end{figure}

Let us be precise about what exactly is recovered by product-input process functionals. 
\begin{theorem}
The following characterisations hold for product-input process functionals:
    \begin{itemize}
        \item Product-input process functionals on classical theory = classical stochastic processes
        \item Product-input process functionals on quantum theory = POPT tensors
        \item Product-input process functionals on $\mathbf{Box}$ = Boxworld process tensors. 
    \end{itemize}
\end{theorem}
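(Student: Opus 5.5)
The three cases share one skeleton, which I will prove once in a generalised theory $\cat{C}$ whose relevant hom-sets span finite-dimensional real vector spaces, and then specialise. The first step is to show that a product-input functional $W$ extends to a multilinear form on the spans of instrument elements. Fix $N_{x'}^{y'}$. By convexity and compatibility with coarse-graining, $M\mapsto W(M,N_{x'}^{y'})$ is additive and convex-linear on instrument elements. It is defined on all of $\cat{C}(A,A')$, since every element $E$ sits in an instrument up to rescaling, for example $\{\lambda E, D-\lambda E\}$ for a deterministic $D$ dominating $\lambda E$. From there it extends uniquely to positive homogeneous additive maps on the cone, and then to linear maps on its real span. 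Here the non-contextuality assumption is what makes $W(E)$ well defined independently of the ambient instrument. Doing this in each slot separately gives a bilinear form $\tilde W:\mathrm{span}\,\cat{C}(A,A')\times\mathrm{span}\,\cat{C}(B,B')\to\mathbb{R}$. By the universal property of the algebraic tensor product, this is a linear functional on $\mathrm{span}\,\cat{C}(A,A')\otimes\mathrm{span}\,\cat{C}(B,B')$. Normalisation says exactly that $\tilde W(D_1\otimes D_2)=1$ for all deterministic $D_1,D_2$, which is the defining property of a product-normalised tensor.

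Conversely, any linear functional that is positive on product elements and normalised on products of deterministic maps satisfies causality, convexity and coarse-graining compatibility componentwise. So in each theory the characterisation reduces to identifying this cone of functionals, together with the associated Born-rule or contraction representation, with the named objects.

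For \emph{quantum theory} the span of CP maps $A\to A'$ is all Hermiticity-preserving maps. Identifying these with Choi operators via the Hilbert–Schmidt pairing, $\tilde W$ becomes an operator $\rho_W$ on $A A' B B'$ with $W(M,N)=Tr[(M\otimes N)\rho_W]$. Positivity on products of CP maps, and hence on products of positive Choi operators, is precisely the POPT condition, and normalisation on CPTP products is the usual process-tensor affine constraint. For \emph{classical theory} the same construction yields a tensor $W(y,y'\mid x,x')$ indexed by the matrix entries. Positivity on products of nonnegative matrices forces entrywise nonnegativity, because elementary matrices are admissible instrument elements. Normalisation on products of stochastic matrices then gives the standard affine constraints, which are exactly those defining classical stochastic (process) maps. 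For $\mathbf{Box}$, the span of morphisms $(a\to a')\to(b\to b')$ is the space of all real tensors. Elementary tensors again force nonnegativity, and normalisation against products of deterministic (NS-preserving) tensors gives the defining condition of Boxworld process tensors as in \cite{bavaresco2024indefinitecausalorderboxworld}. Here I take that notion as definitionally "normalised on deterministic products".

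I expect the \emph{main obstacle} to be the extension step: showing that instrument elements span the relevant space and that the extension is well defined. Additivity must be checked across different instruments, and this is where non-contextuality together with coarse-graining is used. For $\mathbf{Box}$, one also has to verify that every nonnegative tensor occurs, up to scaling, as an element of some deterministic instrument. My first attempt would be to complete it by $D-\lambda T$ with a suitable fully mixing deterministic $D$.
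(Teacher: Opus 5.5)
Your classical and quantum cases go through. In both theories every morphism of $\cat{C}(A,A')$ is, up to scaling, an instrument element, so $\tilde W$ lives on the whole span. Elementary matrices (resp.\ products of positive Choi operators) are admissible, and you recover nonnegative product-normalised tensors and POPT tensors respectively. For quantum theory this is the generalised Gleason argument the paper cites; for classical theory it is a direct substitute for the paper's appeal to the Yoneda lemma. The cross-instrument additivity you worry about is also harmless:
\begin{itemize}
\item coarse-graining gives $W(0)=0$, so convexity gives homogeneity on $[0,1]$;
\item classical control puts any two elements in one instrument;
\item hence $W(E+F)=2W(\tfrac12E+\tfrac12F)=W(E)+W(F)$.
\end{itemize}

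The genuine gap is the $\mathbf{Box}$ case, at exactly the step you flagged, and no completion $D-\lambda T$ can fix it. A deterministic instrument $A\otimes\mathbb{X}\to A'\otimes\mathbb{Y}$ must output no-signalling boxes on $A'\otimes(1\to Y)$. So each element has to send every input box to a sub-box whose total weight does not depend on the inputs of $A'$.

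For $A=I$, $A'=(2\to 2)$, the elements are exactly the nonnegative matrices with constant row sums. The elementary tensor with $T_1^1=1$ and all other entries $0$ is therefore not an element at any scale. The same obstruction appears whenever $A'$ has a nontrivial input.

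Consequently, in $\mathbf{Box}$ your $\tilde W$ is defined only on the proper subspace $V_1\otimes V_2$ spanned by products of admissible elements. Its tensor representative is not unique, and the step \emph{elementary tensors force nonnegativity} is unavailable. To obtain a nonnegative representative you would need a positive-extension argument: positivity of $\tilde W$ on all entrywise-nonnegative tensors in $V_1\otimes V_2$, not just on products of admissible elements. Your proposal does not supply this.

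The paper avoids working on instrument spans altogether. It proceeds in three steps:
\begin{itemize}
\item It turns $W$ into a product frame functional on pairs of destructive measurements, via the cup-based non-signalling scenario normalised by the denominator. For destructive measurements on box systems, elementary tensors \emph{are} admissible (fix the input to $a_0$ and record the output).
\item It obtains a nonnegative representative $\rho$ from the classical Busch--Gleason argument through the inclusion $(A\Rightarrow A')\hookrightarrow A\otimes A'$.
\item It uses channel-state duality, the snake identity and non-contextuality to show that contracting $\rho$ with an arbitrary instrument element reproduces $W$.
\end{itemize}
This transfer from general instruments to destructive measurements is the idea your proposal is missing for $\mathbf{Box}$.
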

\begin{proof}
The first statement follows by the Yoneda lemma for categorical supermaps, the second is a known result of the generalised Gleason theorem of \cite{barnum2005influencefreestatescompoundquantum}, and a proof of the last statement is given in the Appendix.
\end{proof}
Whilst the characterisation for higher-order maps here is correct for the classical case, in both the quantum case and for Boxworld, the resulting characterisations are non-physical. In general, not all POPT tensors are considered legitimate higher-order maps on quantum theory, and not all Boxworld process tensors are considered legitimate higher-order maps on Boxworld \cite{bavaresco2024indefinitecausalorderboxworld}. 

At first glance, the failure to capture the physicality of higher-order maps appears to take on a rather different flavour in the quantum and Boxworld cases. In the quantum case, the issue is a failure of positivity $\rho_W \geq 0$ for the linear map representation $\rho_W$. In the Boxworld case, it is not a failure of positivity but instead a failure of signalling properties, namely, a weakening of the no-signalling-without-system-exchange property, denoted wNSWSE \cite{bavaresco2024indefinitecausalorderboxworld}.
As the Yoneda lemma for categorical supermaps \cite{wilson2026supermapsgeneralisedtheories} suggests, however, this is really a failure in both cases of \textit{completeness}. In order to capture completeness within the framework of process functionals, we will need to turn our attention to process functionals over spaces which extend the set of product instruments, as suggested by the form of the tensor product for product spaces in the category of strong profunctors \cite{hefford_supermaps}.

\subsection{Multipartite process functionals}


To ensure full positivity of the process-matrix representation $\rho_W$ for a product process functional on quantum instruments, one further requires that it should produce legitimate probability distributions when applied to product instruments that are allowed to be entangled as mediated via an environment \cite{oreshkov}.
 In other words, one chooses only to study the process functionals such that furthermore $Tr[(M_{xz}^{y} \otimes N_{x'z'}^{y'} ) \rho_W \otimes \rho^{zz'} ] \geq 0$ for any choice of extended instruments $M,N$ and auxiliary state $\rho$. 
 
 Schematically (although the following schema is not yet formal), what is added as an additional assumption is well-behaviour of instruments which share a common past.
 \begin{figure}[htbp]
\centering

 \[  \input{figs/s_vs_x.tikz} = Tr[(M_{xz}^{y} \otimes N_{x'z'}^{y'} ) \rho_W \otimes \rho^{zz'} ] \geq 0   \]

\caption{Process matrices are traditionally defined as the POPT tensors which satisfy a further positivity constraint \cite{oreshkov}.}
\label{fig:process-matrix-positivity}
\end{figure}

 Formally, note that this positivity condition is traditionally framed with respect to an already derived linear (but not yet positive) representation map $\rho_W$ \cite{oreshkov}. Since the existence of even this intermediate representation is not clearly present within other generalised theories, we are forced to take an alternative route to finding a more careful axiomatisation for process functionals, which encodes this compatibility with experiments involving shared resources, without invoking concepts such as positivity and internal linear representation. 
 
With this in mind, we are actually pushed towards an arguably cleaner axiomatisation even for multipartite process functionals on quantum theory. 
 \begin{definition}
 A ($2$-arity) multipartite process functional is a process functional over the following instrument set:
 \[ 
    \mathcal{M}^2(\mathbb{X}, \mathbb{X}') \ :=  \ \left\{\input{figs/spr_product_new.tikz}\right\}, \] 
    more generally, a multipartite process functional (of $n$-arity) is a process functional over the following instrument set 
     \[ 
    \mathcal{M}^n(\mathbb{X}, \mathbb{X}') \ :=  \ \left\{\input{figs/spr_product_narity.tikz}\right\}. \] 
 \end{definition}
 We can, in fact, take this approach further. Traditionally process matrices with a definite causal order (combs \cite{chiribella_circuits, chiribella_networks, process_tensor}) can be taken to be process matrices which satisfy further constraints. We can, however, axiomatise definite-order process functionals directly, without any reference to internal representation. 
 \begin{definition}
 A definite-order multipartite process functional is a process functional over the following instrument set:
 \[ 
    \mathcal{M}^2_{A \leq B}(\mathbb{X}, \mathbb{X}') \ :=  \ \left\{\input{figs/spr_definite.tikz}\right\}. \] 
 More generally, a total-order multipartite process functional (of $n$-arity) is a process functional over the following instrument set 
     \[ 
    \mathcal{M}^n_{A_1 \leq  \dots \leq A_n}(\mathbb{X}, \mathbb{X}') \ :=  \ \left\{\input{figs/spr_definite_n.tikz}\right\}. \] 
 \end{definition} 
 One can naturally build on these examples to define a myriad of other examples of instrument sets which encode causal constraints for any process functional that might act upon them. 
\begin{theorem}
The following characterisations hold:
\begin{itemize}
\item Multipartite process functionals on quantum theory are equivalent to multipartite process matrices.
\item Definite-order multipartite process functionals on quantum theory are equivalent to the multipartite process matrices with definite causal order.
\end{itemize}
\end{theorem}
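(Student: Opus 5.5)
The proof has two directions for each bullet, and the strategy is to reduce to the product-input case from the earlier theorem and then use the extra instruments in $\mathcal{M}^2$ to obtain positivity. First I would unpack the instrument set $\mathcal{M}^2(\mathbb{X},\mathbb{X}')$. In quantum theory its elements are of the form $\sum_{z,z'}(M_{xz}^{y}\otimes N_{x'z'}^{y'})\circ(\cdot\otimes\rho^{zz'})$. Here the shared resource $\rho$ is a state on auxiliary systems, possibly entangled, and these are exactly the instruments used by Oreshkov--Costa--Brukner. Classical auxiliary wires are a special case, since a classical system can be embedded as a quantum system via $\stoch\hookrightarrow\cat{C}^{\mathbf{d}}$. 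By the same reasoning $\mathcal{M}^2$ contains the product instruments, obtained by taking the auxiliary systems trivial. So any multipartite process functional $W$ restricts to a product-input functional. By the earlier theorem (the generalised Gleason theorem of Barnum et al.), this restriction is $W(M\otimes N)=Tr[(M\otimes N)\rho_W]$ for a unique POPT tensor $\rho_W$ satisfying the linear normalisation constraints.

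Second, I would extend the Born-rule formula from product elements to all elements of $\mathcal{M}^2$. The key step is to write an element $E=(M\otimes N)\circ(1\otimes\rho)$ with a shared environment as a linear combination of product elements. I would expand $\rho=\sum_k c_k\,\sigma_k\otimes\tau_k$ with real $c_k$ and states $\sigma_k,\tau_k$, which is possible because quantum states span the local operator spaces. Then $E=\sum_k c_k\,(M\circ\sigma_k)\otimes(N\circ\tau_k)$ is a real affine combination of product instrument elements. The worry is that $W$ is only assumed convex and coarse-graining compatible, not linear under signed combinations. To handle this, I would use the standard trick: rearrange the signed combination into an equality between two convex or additive combinations of elements of $\mathcal{M}^2$. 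Non-contextuality, together with convexity and additivity of $W$ on $\mathcal{M}^2$, then gives $W(E)=\sum_k c_k W(\cdots)=Tr[E\,\rho_W]$. This is the step I expect to be the main obstacle. One has to check that the required positive rearrangements actually live inside $\mathcal{M}^2$ and not merely inside the product set, i.e.\ that mixing over the environment label commutes with the convex structure. I would attempt it by absorbing the coefficients into a classically controlled environment, using the control property~\eqref{eq:control}.

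Third, positivity follows. Choose $\rho$ maximally entangled and $M,N$ to be Bell-type projections composed with arbitrary local states. Then $Tr[(M\otimes N)(\rho_W\otimes\rho)]\ge 0$ becomes $\langle\psi|\rho_W|\psi\rangle\ge0$ for every vector $\psi$, which is possible because channel-state duality supplies physical cups and caps up to a scalar. Together with the normalisation constraints already inherited from the product case, this shows $\rho_W$ is a process matrix. Conversely, any process matrix $W$ defines $E\mapsto Tr[E\,\rho_W]$ on $\mathcal{M}^2$. This map is convex, compatible with coarse-graining, non-contextual, and causal, and it is nonnegative by the OCB positivity argument. The two constructions are mutually inverse by uniqueness of $\rho_W$. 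The $n$-arity case is identical.

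For definite order, I would run the same argument on $\mathcal{M}^2_{A\le B}$, whose elements are those where $A$'s output may feed into $B$'s input through a channel with memory. The representation step is unchanged and gives a positive $\rho_W$. Normalisation on these sequential instruments, where any post-$A$ channel may be plugged in, forces the comb constraints: the reduced operator obtained after tracing out $B$'s output factorises as $\rho_{A}\otimes 1$, and this is precisely the characterisation of definite-order process matrices (combs). For the converse, every comb gives valid probabilities on sequential networks by the quantum comb theorem.
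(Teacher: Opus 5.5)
Your argument is sound in outline, but it takes the traditional quantum route rather than the paper's. Your chain is as follows. The product Gleason theorem gives a POPT tensor on product instruments. You then get the Born rule on entangled-ancilla instruments by expanding the ancilla state into product states and rearranging the signed combination into an equality of convex combinations inside $\mathcal{M}^2$; this does go through, using correlated classical settings and locally controlled instruments. Positivity comes from teleportation-type instruments, and the rest from normalisation and comb duality.

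The paper instead treats the statement as an instance of its general representation lemma. Quantum theory has physical cups and caps and satisfies Busch's theorem. Moreover, $\mathcal{M}^n$ and $\mathcal{M}^n_{A_1\le\dots\le A_n}$ contain the non-signalling scenario. So $W$ evaluated on those elements, divided by the cup/cap normaliser, is a frame function on destructive measurements of the whole Choi-type system. Busch--Gleason then returns a density operator $\rho_W$ outright, and non-contextuality together with the snake equation give $W$ on every other element.

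In the paper's route positivity is automatic, so the product Gleason theorem, the POPT intermediate and your signed-decomposition step are all unnecessary. That last step relies on local tomography (product states spanning the ancilla operator space), which the paper never invokes. This is part of why its argument transfers to any theory with channel-state duality satisfying the Busch--Gleason axiom. The teleportation instruments in your positivity step are exactly the paper's engine: fed into Busch's theorem, they make your first two steps redundant. What your route buys is concreteness, and an explicit picture of how entangled-ancilla instruments cut POPT tensors down to positive operators.

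Two details need tightening.

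\emph{Positivity witness.} A fixed maximally entangled ancilla with Bell projections and product local preparations only probes $\rho_W$ on a restricted family of vectors, not every $\psi$. One fix is to take the shared ancilla to be an arbitrary pure state, which $\mathcal{M}^2$ allows. Each lab then applies the cap to its input and one ancilla factor and routes the other factor to its output. Alternatively, add a local filter.

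\emph{Comb constraints.} These should read $\rho_W=\rho_{AA'B}\otimes 1_{B'}$ with $\mathrm{Tr}_B\,\rho_{AA'B}=\rho_A\otimes 1_{A'}$. Deriving them from normalisation uses three ingredients:
\begin{itemize}
\item the inclusion $\mathcal{M}^2\subseteq\mathcal{M}^2_{A\le B}$, so that the first part applies;
\item the Born rule on side-channel elements, obtained by expanding the side channel into measure-and-prepare terms as in your second step;
\item the standard duality between sequential networks with slots $A\to A'$ then $B\to B'$ and combs of type $I\to A$, $A'\to B$, $B'\to I$.
\end{itemize}
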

\begin{proof}
A direct proof can be found in \cite{Hoffreumon2023}, furthermore, it is a direct consequence of the process-matrix representation theorem for generalised theories which will be given in Section~4. 
\end{proof}
Furthermore, we will find that we can characterise the generalised theories for which multipartite process functionals support a process-matrix-like representation. This more general theorem will recover the above when applied to quantum theory. 

\section{Process correlations from commutativity}

We now proceed to develop characterisation theorems for process functionals. We begin with the most general structural correspondence, which holds in any generalised theory and relates process functionals directly to categorical supermaps \cite{wilson2026supermapsgeneralisedtheories}. We will find that the process-functional axioms are equivalent to local applicability \cite{wilson_locality} (commutativity) over classical variables. 

To begin, let us note a purely compositional interpretation of parameterised instrument sets. As viewed from the outside, there are a few simple compositional axioms that one might expect of a class of experiments available to a lab: \[   \input{figs/sub_prof_1.tikz}    \]
\begin{itemize}
\item First, given any experiment that can be performed in the lab, we expect that one could construct a new experiment by pre- or post-composing with classical processes, representing the possibility of relabelling and coarse-graining settings and outcomes respectively: \[   \input{figs/sub_prof_2.tikz}    \]
\item Furthermore, one might expect that, given any legitimate experiment, one could construct a new legitimate experiment by simply allowing a further classical environment to pass through untouched, representing the possibility of passing additional information outside the inner workings of the experiment: \[   \input{figs/sub_prof_3.tikz}    \]
\end{itemize} 
Together these possibilities suggest that any reasonable set of experiments, or interventions, should be closed under all classical combs: \[   \input{figs/sub_prof_4.tikz}    \]

A family of sets equipped with an action by classical combs in this way can be modelled abstractly as a \textit{strong profunctor} \cite{benabou, tambara, pastro_street} $\mathcal{M}: \opcat{\mathbf{Stoch}} \times \mathbf{Stoch} \rightarrow \mathbf{Set}$ over classical variables (classicality is expressed by having variables live in the category $\mathbf{Stoch}$). This connects the study of instruments to the study of categorical supermaps \cite{hefford_supermaps}, since it has been established that all higher-order quantum types can be thought of as strong profunctors $\mathcal{P}: \opcat{\mathbf{CPTP}} \times \mathbf{CPTP} \rightarrow \mathbf{Set}$ over quantum variables (where quantum-ness is expressed by having variables live in the category $\mathbf{CPTP}$ of CPTP maps).
One can see quite directly that any sub-strong profunctor $\mathcal{M}(\mathbb{X} , \mathbb{X}')  \subseteq \{ M : A \otimes  \mathbb{X}  \rightarrow A' \otimes \mathbb{X}'   \}$ (any set of instruments closed under action by stochastic combs) is a parameterised instrument set in the sense of the previous section.

A purely compositional notion of compatibility with classical experimental choices and outcomes is then commutation with such pre- and post-processing combs. 



\begin{definition}[Commutation Functional]
A \textit{commutation functional} on a parameterised instrument set $\mathcal{M}$ is a strong natural transformation of type \[   W_{\mathbb{X}\mathbb{X}'}: \mathcal{M}(\mathbb{X} , \mathbb{X}') \rightarrow \mathbf{Stoch}(\mathbb{X} , \mathbb{X}') .   \] 
\end{definition}

Let us now be more concrete and unpack this definition. A commutation functional is a family of functions from instruments to conditional probability distributions which commutes with classical combs. Even more concretely, a commutation functional is a family of functions $W_{\mathbb{X} \mathbb{X}'} :  \mathcal{M}(\mathbb{X} , \mathbb{X}') \rightarrow  \stoch(\mathbb{X} , \mathbb{X}')$, depicted graphically as \[ \input{figs/w_1.tikz}  \] satisfying \[  \input{figs/w_2.tikz}  \ = \   \input{figs/w_3.tikz}.   \] The only graphical difference from standard categorical supermaps \cite{wilson2026supermapsgeneralisedtheories} is that we have required commutation only with red wires, that is, only with respect to classical-variable environments. 

Just as closure under action by combs is equivalent to closure under coarse-graining and convex combinations, commutation with combs on classical variables is precisely the requirement of having a non-contextual assignment which is compatible with coarse-graining and convex combinations.

\begin{theorem}
On any generalised theory $\cat{C}$:
\[ \textrm{Process functionals = Commutation functionals }. \]
\end{theorem}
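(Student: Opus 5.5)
We prove the two inclusions separately, and the pivot is a single description of classical combs. Any morphism of $\stoch$, and hence any classical comb acting on $\mathcal{M}(\mathbb{X},\mathbb{X}')$, can be written using three ingredients:
\begin{itemize}
\item convex mixtures of deterministic (function-induced) stochastic maps;
\item relabelling of settings by functions $f$ and coarse-graining of outcomes by functions $g$;
\item tensoring with an identity wire on a classical environment $\mathbb{Z}$.
\end{itemize}
This uses that every stochastic matrix is a convex combination of deterministic functions, and that a classical comb factorises as a pre-processing, an identity side wire, and a post-processing. The problem therefore reduces to checking naturality separately against functions, against convex mixing, and against the strength (the side wire).

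\textbf{Process functional $\Rightarrow$ commutation functional.} Given $W$, define $W_{\mathbb{X}\mathbb{X}'}(M)$ as the conditional distribution $y|x \mapsto W(M_x^y)$. Causality makes this an element of $\stoch(\mathbb{X},\mathbb{X}')$.
\begin{itemize}
\item Naturality for deterministic pre-processing $f$ and post-processing $g$ is exactly the pair of computations $W_{M_f} = W_M\circ f$ and $W_{M^{g^{-1}}} = g^{-1}\circ W_M$ already given in the paper, the latter using the derived coarse-graining compatibility.
\item For a general stochastic pre-processing $S$, write $(M\circ S)_x^y = \sum_k S(k|x) M_k^y$, a convex combination of instrument elements. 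Convexity together with non-contextuality gives $W((M\circ S)_x^y) = \sum_k S(k|x)W(M_k^y)$.
\item Post-processing by a stochastic $T$ is handled similarly. Expand $(T\circ M)_x^{y'} = \sum_y T(y'|y)M_x^y$, decompose $T$ as a convex mixture of functions, and combine additivity under coarse-graining with convexity.
\item A rescaled element $tM_x^y$ with $t\le 1$ needs care. It appears as a component of a valid instrument (for instance by coarse-graining after mixing with a null outcome), and convexity applied to the mixture $tM + (1-t)M'$ then yields homogeneity.
\end{itemize}

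\textbf{Strength.} For $M\otimes 1_{\mathbb{Z}}$ the components are $M_x^y\otimes \delta_z^{z'}$. We must show $W(M_x^y\otimes\delta_z^{z'}) = W(M_x^y)\delta_z^{z'}$. The case $z\neq z'$ gives the zero element, and $W(0)=0$ follows from coarse-graining additivity. The case $z=z'$ gives an element equal to $M_x^y$ under the identification $\mathbb{Z}$-component $\cong$ scalar $1$ in $\cat{C}$, so non-contextuality gives the equality. This identification is the step I expect to be the main obstacle. One needs that in a generalised theory the component extracted by plugging the deterministic state $z$ and the effect $z'$ into $M\otimes 1_{\mathbb{Z}}$ literally equals $M_x^y$ as a morphism. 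This should follow from the commuting square of embeddings $\stoch\hookrightarrow\cat{C}^{\mathbf d}\hookrightarrow\cat{C}$ and $\mat{\reals^+}\hookrightarrow\cat{C}$, which forces $\langle z'|z\rangle = \delta_z^{z'}$.

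\textbf{Commutation functional $\Rightarrow$ process functional.} Given a strong natural $W$, define $W(M_x^y) := W_{\mathbb{X}\mathbb{X}'}(M)(y|x)$.
\begin{itemize}
\item \emph{Non-contextuality.} Suppose $M_a^b = N_{a'}^{b'}$. Pre-compose with the classical state $a$ (respectively $a'$) and post-compose with the two-outcome coarse-graining $\{b, \text{rest}\}$ (respectively $\{b',\text{rest}\}$). This produces two instruments $I\otimes 1 \to 1\otimes 2$ whose first components coincide. Naturality says $W$ of each equals the corresponding probability. We still need that $W$ on a two-outcome, trivial-setting instrument depends only on its first component. For this, tensor both instruments with a classical bit $\mathbb{Z}$ and use a controlled classical comb that routes setting $0$ to the first instrument and setting $1$ to the second. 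Naturality then forces the outcome-$0$ probabilities to agree, provided the combined instrument is again in $\mathcal{M}$.

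Closure under convex combinations supplies exactly this mixing, but it is not immediate that it equates the two values. The fallback I would try first is to exploit that a two-outcome instrument is determined by $M^0$ together with $M^1$, where $M^0 + M^1$ is deterministic. Then $W(M^0) = 1 - W(M^1)$, and a comparison through an instrument containing both as coarse-grainings does the rest.
\item \emph{Causality, convexity and coarse-graining.} These follow directly. Causality holds because the codomain is $\stoch$. Convexity follows from naturality against the mixing comb. Coarse-graining follows from naturality against function post-processing.
\end{itemize}

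Finally, the two constructions are mutually inverse by definition, which establishes the stated correspondence.
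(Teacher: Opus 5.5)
\textbf{Gap: non-contextuality.} The gap is the non-contextuality step in the direction commutation functional $\Rightarrow$ process functional. It is the only non-routine part of that direction; the paper imports it from the non-contextual assignment lemma of its supermaps reference. Your reduction to two trivial-setting, two-outcome instruments $\tilde M=(M^0,M^1)$ and $\tilde N=(M^0,N^1)$ is fine, but neither closing move works.

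\emph{The routing move.} A comb acts on a single instrument. So ``routing'' needs a controlled instrument $K\in\mathcal{M}(2,2)$ with $K\circ(1\otimes|0\rangle)=\tilde M$ and $K\circ(1\otimes|1\rangle)=\tilde N$. Even given $K$, naturality only yields $W_K(\cdot|0)=W_{\tilde M}$ and $W_K(\cdot|1)=W_{\tilde N}$; it never compares the two settings.

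\emph{The fallback.} It is circular: writing $W(M^0)=1-W(M^1)$ already treats $W$ as a function of the element. Moreover, an instrument that merely has $M^1$ and $N^1$ among its coarse-grainings need not be connected to $\tilde M$ or $\tilde N$ by any comb, so naturality says nothing about it.

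\emph{Your proviso is essential.} Your condition that the combined instrument lie in $\mathcal{M}$ cannot be dropped. Every setting of an element of the comb-orbit of $\tilde M$ sums to $M^0+M^1$, and every setting of an element of the orbit of $\tilde N$ sums to $M^0+N^1$. So if $\mathcal{M}$ is just the union of these two orbits and the sums differ, the orbits are disjoint. If in addition $\{M^0,M^1\}$ and $\{M^0,N^1\}$ are each linearly independent (e.g.\ suitable qubit instruments), take distributions $p,q$ on the two outcomes with $p_0\neq q_0$. Assign $c\circ(p\otimes 1_Z)\circ c'$ to $(1\otimes c)\circ(\tilde M\otimes 1_Z)\circ(1\otimes c')$, and likewise with $q$ on the other orbit. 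This is well defined and strongly natural, but contextual. So you need $\mathcal{M}$ closed under classical control. The full instrument set has this property by the control axiom of generalised theories.

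\textbf{The missing argument, given $K$.} The idea is a symmetry argument.
\begin{enumerate}
\item Copy the setting of $K$ to the output and relabel by the function $(0,0)\mapsto a$, $(0,1)\mapsto b$, $(1,s)\mapsto c$. This is a classical comb, and it gives $L\in\mathcal{M}(2,3)$.
\item Naturality under restriction to each setting and under the injective relabellings gives $W_L(\cdot|0)=(W_{\tilde M}(0),0,W_{\tilde M}(1))$ and $W_L(\cdot|1)=(0,W_{\tilde N}(0),W_{\tilde N}(1))$.
\item Pre-composing with the uniform distribution gives the trivial-setting instrument $\bar L=(\tfrac{1}{2}M^0,\tfrac{1}{2}M^0,\tfrac{1}{2}(M^1+N^1))$, with $W_{\bar L}(a)=\tfrac{1}{2}W_{\tilde M}(0)$ and $W_{\bar L}(b)=\tfrac{1}{2}W_{\tilde N}(0)$.
\item $\bar L$ is invariant under the swap $a\leftrightarrow b$, so naturality under that swap forces $W_{\tilde M}(0)=W_{\tilde N}(0)$.
\end{enumerate}

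\textbf{The rest of the proposal.} The remainder is sound.
\begin{itemize}
\item Your homogeneity step needs $tM_x^y$ realised as a mixture of two settings of a single instrument. You can get this inside any sub-profunctor from $M\otimes 1_2$ followed by a controlled relabelling to a null outcome. This mirrors the paper's controlled-channel treatment of stochastic post-processing.
\item For strength, the identification you flagged is immediate from the interchange law: $(y\otimes z')\circ(M\otimes 1_Z)\circ(1\otimes x\otimes z)=\langle z'|z\rangle\, M_x^y$. The paper instead obtains strength from bare naturality via $(M\otimes 1_Z)\circ(1\otimes z)=(1\otimes z)\circ M$.
\end{itemize}
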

\begin{proof}
Given in the Appendix.
\end{proof}

The theorem gives us a conceptual and practical takeaway. Conceptually, the traditional axioms of process functionals can be recast into a purely compositional principle, namely commutativity with classical combs. Practically, this commutation framing allows us to reason with process functionals diagrammatically, using function boxes for string diagrams.

\section{Process matrix representation in generalised theories}
In this section we ask a fundamental question regarding process functionals on generalised theories: do they support a process-matrix representation? In other words, for any process functional $W$, does there exist some notion of representative state $\rho_{W}$? Recall that in the quantum setting, such a state is required to satisfy two conditions:
\begin{itemize}
\item $\rho_W \geq 0$,
\item  $Tr[\rho_{W} M_{a}^{b}  ] = W(M_{a}^{b})$.
\end{itemize}
The positivity requirement is simply the requirement that the linear operator $\rho_W$ be a legitimate state of quantum theory. As a result, we can quite directly state what it means to give a process-matrix representation in any theory. 
\begin{definition}[Process Matrix on a generalised theory]
For any generalised theory with channel-state duality and any instrument sub-profunctor $\mathcal{M}$, an abstract process functional $W$ is said to have a process-matrix representation if there exists a state $\rho_{W}$ such that, \[ \input{figs/box_tensor_6.tikz} \ = \  \input{figs/pm_rep_1.tikz}  \] 
\end{definition}
The traditional proof of the existence of a process-matrix representation for process functionals on quantum theory can be thought of as making judicious use of the Busch-Gleason theorem, which states that any frame function over POVM elements is a quantum state. 
Here, a frame function is taken to be a function $f(E^y)$ such that $\sum_{y}f(E^y) = 1$. It is quite direct to see that such functions on POVMs are convex linear and compatible with coarse-graining, so that the Busch-Gleason theorem really can be understood as stating that process functionals on destructive measurements, that is, measurements of the form $E: A \rightarrow I \otimes \mathbb{X}$ are always represented by quantum states by $f(E^y) = Tr[\rho_W E^y]$. 

Noting that the notion of frame function and the notion of internal representative state exist for any generalised theory, we now lift the Busch-Gleason theorem for quantum theory to the level of an axiom on generalised theories. We will see that this axiom is both necessary and sufficient to prove a process-matrix representation. 
\begin{definition}[Busch-Gleason Axiom for generalised theories]
We say that a generalised theory satisfies the Busch-Gleason axiom if for every frame function there exists a representative state $\rho_{f}$ such that $f(E^y) = E^y \circ \rho_f$.
\end{definition}
Note that, by the equivalence between commutation functionals and process functionals (which in turn specialise to frame functions), the Busch-Gleason axiom can be reinterpreted as stating that for any family of functions on destructive measurements which commutes with classical post-processing in the following sense \[    \tikzfigscale{0.7}{figs/busch_gleason_1} \ = \ \tikzfigscale{0.7}{figs/busch_gleason_2} \] there exists a representative state $\rho_{f}$ satisfying \[     \tikzfigscale{0.7}{figs/busch_gleason_0} \ = \ \tikzfigscale{0.7}{figs/busch_gleason_3}. \]

The Busch-Gleason axiom as we have stated requires the existence of a representative state, but not necessarily uniqueness of that state. The requirement that a representative state be unique is equivalent to the requirement that states be topologically distinguishable. 
\begin{definition}
A pair of states $\rho, \sigma : I \rightarrow A$ are tomographically distinguishable if there exists some destructive measurement $E : A \rightarrow {\color{red}\mathbb{X}}$ such that $E \circ \rho \neq E \circ \sigma$. A generalised theory is termed tomographically distinguishable if every pair of states is tomographically distinguishable. 
\end{definition}
When a generalised theory is tomographically distinguishable, this precisely means that any pair of states $\rho, \sigma$ represent distinct frame functions (as witnessed by evaluating on the distinguishing measurement $E$). Therefore, tomographic distnguishability is equivalent to uniqueness of the Busch-Gleason representation.


We will now proceed to characterise process functionals in a large variety of cases.
\begin{lemma}[Process Matrix Representation for Causal Process Functionals]
Consider any generalised theory $\cat{C}$ satisfying,
\begin{itemize}
\item Channel-state duality,
\item The Busch-Gleason axiom.
\end{itemize}
Consider any parameterised instrument set $\mathcal{M}$ in $\cat{C}$ which contains the non-signalling scenario  \[   \input{figs/in_instrument_1.tikz} \ \in \ \mathcal{M} . \]
Then every abstract process functional $W$ on $\mathcal{M}$ has a process-matrix representation, furthermore, if the generalised theory is tomographically distinguishable, then this representation is unique. 
\end{lemma}
\begin{proof}
Given in the Appendix. 
\end{proof}
We now specialise this representation to multipartite process functionals. 
\begin{theorem}
Consider any generalised theory $\cat{C}$ with channel-state duality. The following are equivalent:
\begin{itemize}
\item Busch-Gleason axiom,
\item For every one-input process functional $W$ there exists a state $\rho_W$ such that: \[ \input{figs/box_tensor_6.tikz} \ = \  \input{figs/pm_rep_1.tikz}, \] 
\item For every definite-order multipartite process functional $W$ there exists a state $\rho_W$ such that: \[   \input{figs/process_thm_def.tikz}  \ = \   \input{figs/process_thm_def2.tikz}  .  \] 
\item For every multipartite process functional $W$ there exists a state $\rho_W$ such that: \[   \input{figs/process_thm_1.tikz}  \ = \   \input{figs/process_thm_2.tikz}  .   \]
\end{itemize}
Furthermore, when the generalised theory is tomographically distinguishable, each representation $\rho_W$ is unique. 
\end{theorem}
\begin{proof}
This amounts to simply checking that any $n$-arity instrument set contains the no-signalling scenario, a full proof of this fact is given in the Appendix. 
\end{proof}
Consequently, we have arrived at a clean characterisation of the theories which support a process-matrix representation theorem. 

\section{When are supermaps and process matrices equivalent?}

An alternative viewpoint to the process functional approach, commonly associated with the term \emph{quantum superchannel}, ignores direct reference to probabilistic structure and instead concerns the generalisation of completeness to transformations between quantum operations. At a higher level, this completeness is the requirement that a spacetime environment can be applied to part of any bipartite process, or equivalently, in the context of an environment consisting of physical systems. 

\begin{figure}[htbp]
\centering

\[  \input{figs/s_vs_p_1.tikz} \]

\caption{A supermap is a complete operation on deterministic operations, that is, it can be applied in the context of any physical environment.}
\label{fig:supermap-environment}
\end{figure}

We give a precise linear algebraic definition for quantum superchannels in the appendix.
The linear algebraic approach can be translated into the language of categorical quantum mechanics, and in turn be generalised to a broader class of theories, so long as those theories exhibit a channel-state duality. 
\begin{definition}
    A CJ-supermap of type \[ (A \Rightarrow A') \rightarrow (B \Rightarrow B') \] on a generalised theory with channel-state duality is a process of type $S: B \otimes A' \rightarrow B' \otimes A$ in $\cat{C}$ such that for every $\phi : A \otimes X \rightarrow A' \otimes X'$ in $\cat{C}^{\mathbf{d}}$ the following map is also in $\cat{C}^{\mathbf{d}} $. \[ \input{figs/cj_supermap_1.tikz} \]
\end{definition}
This definition generalises to the multi-input case \cite{wilson2026supermapsgeneralisedtheories}. A CJ-supermap of type \[ \bigtimes_{i}(A_i \Rightarrow A_i') \rightarrow (B \Rightarrow B') \] on a generalised theory is a process of type $S: B \otimes (A_1' \otimes \dots \otimes A_n') \rightarrow B' \otimes (A_1 \otimes \dots \otimes A_n)$ in $\cat{C}$ such that for every collection $\phi_i : A_i \otimes X_i \rightarrow A_i' \otimes X_i'$ of processes in $\cat{C}^{\mathbf{d}} $ the following process is also in $\cat{C}^{\mathbf{d}} $. \[ \input{figs/cj_supermap_2.tikz}. \] 
In the case of quantum theory, this is verbatim the standard definition of quantum supermaps, and similarly for classical supermaps. In a recent work \cite{wilson2026supermapsgeneralisedtheories} it was proven that such maps on Boxworld are characterised by a physically well-motivated weakening of the NSWSE condition for Boxworld tensors used for the study of indefinite causal order in Boxworld \cite{bavaresco2024indefinitecausalorderboxworld}. This weakening was denoted cwNSWSE, shorthand for tensors that satisfy wNSWSE and continue to do so under tensor extension, in other words tensors that are \textit{completely} wNSWSE.

Supermaps can even be generalised to theories which do not exhibit a channel-state duality. A categorical supermap on a general symmetric monoidal category can be defined as a locally applicable transformation, which we from now on refer to simply as a categorical supermap. 
\begin{definition}[Categorical Supermaps]
A one-input categorical supermap of type \[S: (A \Rightarrow A') \rightarrow (B \Rightarrow B')\] is a family of functions \[   S_{X,X'} :   \cat{C}^{\mathbf{d}}(A \otimes X, A' \otimes X') \rightarrow \cat{C}^{\mathbf{d}}(B \otimes X , B' \otimes X')  \] satisfying  \[   \input{figs/lot_1_outside.tikz} \ = \   \input{figs/lot_1_inside.tikz} . \]
\end{definition}
This abstract approach also has a multi-party generalisation \cite{wilson2026supermapsgeneralisedtheories}: a multi-input categorical supermap of type \[    S:   \bigtimes_i (A_i \Rightarrow A_i') \rightarrow (B \Rightarrow B')  \] is a family of functions satisfying  \[   \input{figs/lot_multi_outside.tikz} \ = \ \input{figs/lot_multi_inside.tikz}.    \]
The categorical supermap definition can in fact be applied to any strong profunctor, that is, any space closed under action by combs. The CJ-supermap definition can be applied to any sub-strong profunctor of the canonical partial-hom profunctor $\mathcal{C}(A \otimes - , A'\otimes = )$. 

\begin{lemma}[Supermaps define probabilistic correlations]
For any generalised theory $\cat{C}$, every multi-input categorical supermap on $\cat{C}$ with trivial output type $I \Rightarrow I$ defines a product process functional.
\end{lemma}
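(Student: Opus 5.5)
Given a multi-input categorical supermap $S$ of type $\bigtimes_i (A_i \Rightarrow A_i') \rightarrow (I \Rightarrow I)$, the plan is to define the product process functional by letting $S$ act on product instruments and reading the result as a conditional distribution. Take instruments $M^{(i)} : A_i \otimes \mathbb{X}_i \rightarrow A_i' \otimes \mathbb{X}_i'$ in $\cat{C}^{\mathbf{d}}$. Since $\stoch \hookrightarrow \cat{C}^{\mathbf{d}}$, these are legitimate inputs to the component $S_{\{\mathbb{X}_i\},\{\mathbb{X}_i'\}}$. The output $S(M^{(1)},\dots,M^{(n)})$ is a deterministic process $\bigotimes_i \mathbb{X}_i \rightarrow \bigotimes_i \mathbb{X}_i'$ between classical objects. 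Because the embedding of $\stoch$ into $\cat{C}^{\mathbf{d}}$ is full, this output is a stochastic matrix. We therefore set
\[ W(M^{(1)}{}_{x_1}^{y_1},\dots,M^{(n)}{}_{x_n}^{y_n}) := S(M^{(1)},\dots,M^{(n)})(y_1\dots y_n \vert x_1\dots x_n). \]
Causality, $\sum_{\vec y} W = 1$ for every $\vec x$, then holds automatically from stochasticity.

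The next step is non-contextuality. We must show that the value depends only on the component $M^{(i)}{}_{x_i}^{y_i}$ in each slot, not on the instrument in which it sits. The idea is to use local applicability of $S$ with respect to classical combs on each wire.

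1. Precomposing with the deterministic state $x_i : I \rightarrow \mathbb{X}_i$ commutes with $S$. This reduces to the case of trivial input.
2. Postcomposing with the classical effect that tests $y_i$ against its complement (a map $\mathbb{X}_i' \rightarrow 2$) also commutes with $S$. This reduces to a two-outcome instrument $\{M^{y_i}_{x_i}, \sum_{k\neq y_i} M^{k}_{x_i}\}$.

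Comparing two instruments that share the component $M^{y_i}_{x_i}$ in slot $i$ needs one more step. Their complementary components differ, so we must show that $W$ applied to $\{E, F\}$ and to $\{E, F'\}$ agree on the outcome $E$. This is the step I expect to be the main obstacle, since naturality alone only identifies instruments related by classical combs. My first attempt is to build a joint instrument with three outcomes, $E, F, F'$, mixed by a classical control bit. I would use the controlled-process axiom (\ref{eq:control}), together with convexity of $\cat{C}^{\mathbf{d}}$, to realise the instrument $\tfrac12\{E,F,0\}+\tfrac12\{E,0,F'\}$. Then I would use naturality in the input wire, which feeds the control bit, together with coarse-graining of outputs. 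These should relate the $E$-probability of each original instrument to the joint one. If that fails, I would instead embed both instruments as postprocessings of a single instrument whose first outcome is $E$ and whose remaining outcomes are $F$ and $F'$ tagged by the control bit. The $E$-outcome probability is preserved by both relabellings, and strong naturality in the extra passing-through classical wire handles the control.

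Once non-contextuality is in hand, the remaining properties follow from the commutation characterisation. By the earlier theorem identifying process functionals with commutation functionals, strong naturality over classical combs is equivalent to convexity and compatibility with coarse-graining. Restricting $S$'s local applicability to classical environments gives exactly this commutation in each slot separately. Convexity within a slot comes from precomposing with a classical state that prepares a mixed setting, followed by the corresponding stochastic processing. Coarse-graining comes from postcomposing with a function $g$, as in the computation of $W_{M^{g^{-1}}}$ in Section 2. Applying these slot by slot yields component-wise convex-linearity, component-wise coarse-graining and joint normalisation. Together these are exactly the defining conditions of a product process functional.
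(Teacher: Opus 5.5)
Your overall plan matches the paper's: evaluate $S$ on the classical wires, read stochasticity of the output as causality, and get the remaining properties from commutation with classical combs. However, the step you flag as the main obstacle does not go through as written.

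\textbf{First attempt.} Let $G$ be the controlled instrument with $G_0=\{E,F,0\}$ and $G_1=\{E,0,F'\}$. Input naturality and coarse-graining give $p:=S(\{E,F\})(0)=S(G)(0|0)$ and $p':=S(\{E,F'\})(0)=S(G)(0|1)$. Convexity then says that $\{E,\tfrac12F,\tfrac12F'\}$ has $E$-probability $\tfrac12(p+p')$. That is only an averaging identity. It is consistent with $p\neq p'$, and nothing in naturality forces $S(G)(0|c)$ to be independent of $c$.

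\textbf{Second attempt.} This one assumes exactly that independence. Saying the $E$-probability ``is preserved'' when the control changes is the statement to be proved. Strength cannot supply it: strength only relates instruments of the form $b\circ(M\otimes 1_Z)\circ a$ with $M$ blind to $Z$. Your $G$ genuinely uses its control, and typically no single instrument contains $E$, $F$ and $F'$ as separate outcomes, since $E+F+F'$ is not deterministic.

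\textbf{The missing idea.} Mix each instrument with a branch that puts zero weight on the outcome of interest. With $D=E+F$ and $D'=E+F'$ (both deterministic),
\[ \tfrac12\{E,F\}+\tfrac12\{0,D'\} \;=\; \{\tfrac12E,\ \tfrac12(E+F+F')\} \;=\; \tfrac12\{E,F'\}+\tfrac12\{0,D\}. \]
Here $\{0,D'\}$ is $D'$ followed by the point state $1$ on the outcome wire. Output naturality therefore gives it probability $0$ on outcome $0$, jointly with any fixed outcomes in the other slots. Convexity then yields $\tfrac12 p=\tfrac12 p'$; you correctly derive convexity from input naturality against a mixed state on a control wire supplied by the controlled-process axiom. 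Combined with your Steps~1--2, this gives full non-contextuality slot by slot, and makes your argument self-contained.

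\textbf{Alternative: the paper's route.} Its proof is just the observation that $S$, restricted to classical environments, is a commutation functional in each slot. The theorem ``process functionals $=$ commutation functionals'' then delivers non-contextuality together with convexity and coarse-graining; its appendix proof obtains non-contextuality from the cited non-contextual assignment lemma. You invoked that theorem only for the latter two properties, although it covers all three.
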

\begin{proof}
Follows simply by keeping the evaluation of the categorical supermap on classical wires \[  W(M_1,M_2) \  := \   \input{figs/sup_proc_1.tikz}  . \]
\end{proof}
\begin{lemma}[Supermaps are process functionals]
For any generalised theory $\cat{C}$ with channel-state duality, every multi-input categorical supermap on $\cat{C}$ with trivial output type $I \Rightarrow I$ defines a multipartite process functional.
\end{lemma}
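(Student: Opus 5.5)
Given a multi-input categorical supermap $S:\bigtimes_i (A_i \Rightarrow A_i') \rightarrow (I \Rightarrow I)$, the plan is to define the candidate multipartite functional $W$ on the instrument set $\mathcal{M}^n$ directly. An element of $\mathcal{M}^n(\mathbb{X},\mathbb{X}')$ is, by definition, built from local instruments $M_i : A_i \otimes X_i \rightarrow A_i' \otimes X_i'$ together with a shared environment: a deterministic state (or more generally a deterministic process) on the auxiliary systems that feeds into the $X_i$, and whose outputs $X_i'$ are discarded or post-processed into the classical outcome wire $\mathbb{X}'$. I would set $W$ on such an element to be the result of applying $S$ to the tuple $(M_1,\dots,M_n)$, viewed as deterministic processes with the auxiliary systems (including the red classical wires) as environment, and then plugging the shared environment into the resulting process of type $X \otimes \mathbb{X} \rightarrow X' \otimes \mathbb{X}'$. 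Because $S$ has trivial output type, the result is a deterministic process $\mathbb{X} \rightarrow \mathbb{X}'$ with classical wires. Since $\stoch \hookrightarrow \cat{C}^{\mathbf{d}}$ is full, this is a stochastic matrix, i.e. a conditional distribution.

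The first thing to check is well-definedness: the value must not depend on how an element of $\mathcal{M}^n$ is decomposed into local instruments plus environment. Here I would use local applicability, i.e. the multi-input naturality square of categorical supermaps. Any two decompositions should be related by sliding processes on the environment wires through $S$, and naturality makes the result independent of the choice. Channel-state duality enters at this stage. Using the physical cups and caps, any element of $\mathcal{M}^n$ can be rewritten into a normal form in which the shared past and future are just extra environment wires of a single deterministic process. The scalar factors in the cup and cap instruments are absorbed by the convex enrichment. Where they appear, the non-deterministic pieces can be handled by the classical control axiom: I would realise each sub-normalised component as one outcome of a deterministic instrument with a classical outcome wire, and then read off the relevant matrix entry.

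With $W$ defined, I would show it is a commutation functional on $\mathcal{M}^n$, namely a strong natural transformation into $\stoch(\mathbb{X},\mathbb{X}')$. Pre- and post-composition with classical combs on the red wires is exactly a special case of the naturality of $S$ in its environment, restricted to classical variables. The earlier theorem stating that process functionals coincide with commutation functionals then gives causality, convexity and coarse-graining compatibility. Non-contextuality also follows from that theorem, since $W(M_x^y)$ depends only on the component obtained by plugging in the classical point $x$ and effect $y$.

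I expect the main obstacle to be the well-definedness / normal form step. The shared environment in $\mathcal{M}^n$ can entangle the laboratories, and it may only be expressible as a deterministic process after rewriting with channel-state duality. I would first try to show that every element of $\mathcal{M}^n$ factors as a comb whose inner slots are filled by the local deterministic instruments, with the comb itself built from deterministic maps and cups and caps. Extended naturality of $S$ along such combs would then finish the argument. If that factorisation fails in general, the fallback is to pass through the CJ-supermap representation afforded by channel-state duality and define $W$ by direct contraction. That representation may only be available under additional assumptions, which is why the cleaner naturality route is preferred.
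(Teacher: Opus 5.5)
Your definition of $W$ matches the paper's: apply $S$ to the local instruments with the auxiliary wires as environment, then close off with the shared pre- and post-processing. Your downstream steps are also fine: the output lies in $\stoch(\mathbb{X},\mathbb{X}')$, commutation with classical combs is inherited from strong naturality, and the earlier theorem then gives causality, convexity, coarse-graining and non-contextuality.

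The gap is the well-definedness step on which all of this rests. $\mathcal{M}^n(\mathbb{X},\mathbb{X}')$ is a set of \emph{processes}, so you must show that $\mathrm{post}\circ S(M_1,\dots,M_n)\circ \mathrm{pre}$ depends only on the composite $\mathrm{post}\circ(M_1\otimes\cdots\otimes M_n)\circ\mathrm{pre}$. Naturality of $S$ only gives invariance along the sliding moves that generate the coend relation on formal composites. Two decompositions with the same composite, possibly through entirely different environment systems, need not be linked by any chain of such moves: the map from formal composites to actual processes is not injective in general. So your claim that ``any two decompositions should be related by sliding'' is unjustified. Your channel-state-duality ``normal form'' does not repair it, because you never exhibit a normal form that is a function of the composite alone.

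The missing idea is the one you relegated to a fallback: the Yoneda lemma for categorical supermaps. It is available under channel-state duality, which is the hypothesis of the lemma, not an extra assumption. It gives a process $\tilde{S}: A_1'\otimes\cdots\otimes A_n'\rightarrow A_1\otimes\cdots\otimes A_n$ in $\cat{C}$ with the following property. Applying $S$ to any decomposition and then composing with the shared pre/post processes equals the contraction, via the physical cups and caps, of the whole composite (indeed of any component $\Phi_x^y$) with $\tilde{S}$.

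This is exactly the paper's argument. If two multipartite instrument elements are equal, rewrite each evaluation as this contraction, substitute one element for the other, and rewrite back. This yields well-definedness and non-contextuality in one stroke.

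Your remark that channel-state duality enters through sliding sub-normalised pieces is on the right track. That means cups and caps realised as instrument elements, with scalars handled via classical control. But carried out properly, that sliding \emph{is} the proof of the Yoneda lemma, and its output is the contraction formula, not a deterministic normal form.
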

\begin{proof}
Follows from the Yoneda Lemma for categorical supermaps. We define the action on general multi-partite instrument shapes by  \[ W \left( \input{figs/sup_proc_a.tikz}  \right) \  := \   \input{figs/sup_proc_b.tikz} \] however, we must check that this assignment is well-formed. To do so, let us suppose that \[  \input{figs/sup_proc_2.tikz} \  = \   \input{figs/sup_proc_3.tikz} \] then we have that \[  W \left( \input{figs/sup_proc_2.tikz}  \right) \ = \   \input{figs/sup_proc_4.tikz} \  = \   \input{figs/sup_proc_5.tikz}  \]
\[   = \    \input{figs/sup_proc_6.tikz} \  = \   \input{figs/sup_proc_7.tikz}  \ = \ W \left( \input{figs/sup_proc_3.tikz}  \right). \]
\end{proof}

To find a characterisation of the generalised theories on which supermaps are equivalent to process matrices, we will need a final basic principle, tomographic closure. 
\begin{definition}
A process $M: A \rightarrow A'$ is \textit{tomographically deterministic} when the concatenation  \[   \input{figs/stat_closed_1.tikz},   \] of $M: A \rightarrow A'$ with arbitrary deterministic preparations $\rho$ and measurements $\sigma$ returns a deterministic classical channel.
\end{definition}
\begin{definition}
A generalised theory is tomographically closed when the following statements are equivalent:
\begin{itemize}
\item $M: A \rightarrow A'$ is deterministic
\item $M: A \rightarrow A'$ is tomographically deterministic
\end{itemize}
\end{definition}
For a theory to be tomographically closed means that any physical process which produces normalised statistics is a deterministic process. Note furthermore that classical theory, quantum theory, and Boxworld are easily confirmed to be tomographically closed.

Note that we will use the word supermaps on a generalised theory unambiguously. By the Yoneda lemma for categorical supermaps there really is no difference between strong natural transformations and CJ-supermaps. 
\begin{theorem}
Consider a generalised theory $\cat{C}$ with channel-state duality:
\[ \textrm{ $\cat{C}$ is tomographically closed}  \]
\[  \iff  \]
\[  \{ \textrm{Multipartite Process Matrices} \}  \cong  \{ \textrm{Supermaps of type $S: \bigtimes_i (A_i \Rightarrow A_i') \rightarrow (I \Rightarrow I)$} \}   \]
\end{theorem}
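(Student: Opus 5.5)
The plan is to identify both sides of the correspondence with subsets of the same set of morphisms. Using channel-state duality, we identify a supermap of type $S: \bigtimes_i (A_i \Rightarrow A_i') \rightarrow (I \Rightarrow I)$ with a process $S : A_1' \otimes \dots \otimes A_n' \rightarrow A_1 \otimes \dots \otimes A_n$ in $\cat{C}$; by the Yoneda lemma for categorical supermaps this is the same as a strong natural transformation. Bending wires with cups and caps, the same process corresponds to a state $\rho_S$. A multipartite process matrix is a state $\rho_W$ that represents some multipartite process functional in the sense of the representation theorem of the previous section. The claimed isomorphism is then the identity $\rho \leftrightarrow S$ under wire bending. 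So we must show that the two defining conditions (determinism of all contractions with deterministic $\phi_i$, versus normalisation of all contractions with multipartite instruments) pick out the same morphisms exactly when $\cat{C}$ is tomographically closed.

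\textbf{Supermaps $\subseteq$ process matrices (always).} Given a CJ-supermap $S$, the lemma that supermaps are process functionals produces a multipartite process functional $W_S$. Unfolding its definition, $W_S$ is evaluated by contracting the instrument with $S$. Bending wires turns this into the process-matrix form with $\rho_W = \rho_S$, so every supermap is a process matrix. This direction uses only channel-state duality.

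\textbf{Tomographic closure $\Rightarrow$ process matrices $\subseteq$ supermaps.} Take a process matrix $\rho_W$ with associated process $S$, and deterministic $\phi_i : A_i \otimes X_i \rightarrow A_i' \otimes X_i'$. We must show that the contraction $S[\phi_1,\dots,\phi_n] : \bigotimes_i X_i \rightarrow \bigotimes_i X_i'$ lies in $\cat{C}^{\mathbf{d}}$, and by tomographic closure it suffices to show it is tomographically deterministic. So compose with an arbitrary deterministic preparation on $\bigotimes_i X_i$ (and on any environment) and an arbitrary measurement on $\bigotimes_i X_i'$. The resulting diagram is precisely $\rho_W$ applied to an element of the $n$-arity multipartite instrument set $\mathcal{M}^n$: local deterministic instruments sharing a common past state and a common future measurement. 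Since $W$ is causal, the resulting classical channel is normalised, hence deterministic. This gives tomographic determinism and so determinism.

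\textbf{Equivalence $\Rightarrow$ tomographic closure.} Deterministic implies tomographically deterministic always, so let $M : A \rightarrow A'$ be tomographically deterministic and regard $M$ as a candidate two-input object of type $(I \Rightarrow A) \times (A' \Rightarrow I) \rightarrow (I \Rightarrow I)$, i.e.\ $S = M$.
\begin{itemize}
\item First show that its bent state is a process matrix. A bipartite instrument of this type is a (controlled) preparation into $A$ and a (controlled) measurement on $A'$, possibly sharing a past environment and a joint final measurement. Composing with $M$ gives $M$ tensored with an environment, sandwiched between a deterministic preparation and measurement, so tomographic determinism gives normalisation.
\item Then invoke the assumed equivalence: $M$ is a supermap.
\item Finally evaluate the supermap on the deterministic processes $\phi_1 = 1_A$ (with $X_1 = A$, $X_1' = I$) and $\phi_2 = 1_{A'}$ (with $X_2 = I$, $X_2' = A'$). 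The supermap condition returns exactly $M$ as a deterministic process, so $M \in \cat{C}^{\mathbf{d}}$.
\end{itemize}

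I expect the main obstacle to be the environment issue in the last two directions. The definition of tomographically deterministic must be read (or shown equivalent to a reading) that allows preparations and measurements on $A \otimes E$ and $A' \otimes E$ with an ancillary environment $E$ passing alongside $M$. Matching the shared-environment diagrams of $\mathcal{M}^n$ against this definition is the delicate bookkeeping step. I would first try to absorb the environment by noting that $M \otimes 1_E$ composed with a deterministic preparation and measurement is again such a concatenation for the process $M \otimes 1_E$. If needed, I would then strengthen the statement to require tomographic closure to be stable under tensoring with identities, which holds in the classical, quantum and $\mathbf{Box}$ examples.
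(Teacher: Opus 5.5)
Your proposal is correct and follows the paper's proof. Supermaps give process matrices immediately. For the middle direction, closure upgrades tomographic determinism of the contracted map to determinism, since its sandwich with any deterministic preparation and measurement is exactly $\rho_W$ evaluated on an element of $\mathcal{M}^n$. For the converse, you bend a tomographically deterministic $M$ into a bipartite process matrix, invoke the equivalence, and evaluate the resulting supermap on identity/swap wires, just as the paper does. The one difference is that you apply closure to $S[\phi_1,\dots,\phi_n]$ in one step, whereas the paper first applies it to the map obtained by plugging in one party's process with the other party's wires left open; your version is, if anything, more direct.

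The environment point you raise only matters in the converse direction. It is settled by reading the tomographic-determinism tests as allowing an ancilla to pass alongside $M$; the paper's converse step needs this same reading. Your `absorption' remark does not settle it, because on its own it only restates the requirement for $M\otimes 1_E$ rather than deriving it from the hypothesis on $M$.
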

\begin{proof}
Given in the Appendix.
\end{proof}
We can furthermore combine the process-matrix representation in the presence of the Busch-Gleason axiom to conclude that for any generalised theory $\cat{C}$ with channel-state duality, satisfaction of the Busch-Gleason axiom together with tomographic conditions (namely closure and distinguishability) are equivalent to requiring that multipartite process functionals are equivalent to supermaps. 
\begin{theorem}
Consider a generalised theory $\cat{C}$ with channel-state duality:
\[ \textrm{Busch-Gleason + Tomographic Distinguishability + Tomographic Closure}  \]
\[  \iff  \]
\[  \{ \textrm{Multipartite Process Functionals} \}  \cong  \{ \textrm{Supermaps of type $S: \bigtimes_i (A_i \Rightarrow A_i') \rightarrow (I \Rightarrow I)$} \}   \]
\end{theorem}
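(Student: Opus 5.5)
The plan is to split the equivalence into two directions and route both through the previous theorem (tomographic closure $\iff$ multipartite process matrices $\cong$ supermaps) and the process-matrix representation theorem of Section~4. Throughout I read ``multipartite process matrix'' as a state $\rho_W$ whose pairing with every multipartite instrument (via the cups/caps given by channel-state duality) defines a multipartite process functional. The main technical work is then to identify process functionals with process matrices, bijectively.

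For ($\Rightarrow$), assume Busch-Gleason, tomographic distinguishability and tomographic closure.
\begin{itemize}
\item By the multipartite representation theorem (Busch-Gleason plus channel-state duality), every multipartite process functional $W$ admits a representing state $\rho_W$.
\item By tomographic distinguishability, $\rho_W$ is unique. Hence $W \mapsto \rho_W$ is a well-defined map into process matrices.
\item Conversely, every process matrix defines a functional by evaluation. Causality, convexity and coarse-graining compatibility follow from the fact that composition in $\cat{C}$ is convex-bilinear, together with the defining normalisation of a process matrix.
\item The two maps are mutually inverse: one composite is the identity by the representation equation, the other by uniqueness.
\end{itemize}
This gives $\{\textrm{process functionals}\}\cong\{\textrm{process matrices}\}$. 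Composing with the previous theorem (which applies since $\cat{C}$ is tomographically closed) yields the isomorphism with supermaps. I would also check that the composite bijection agrees with the map of the lemma ``Supermaps are process functionals''. This is a diagram chase: plug the CJ-form of the supermap, as given by the Yoneda lemma for categorical supermaps, into the multipartite instrument.

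For ($\Leftarrow$), assume the isomorphism, meaning that the canonical map from supermaps to process functionals is a bijection. I would extract each axiom in turn.
\begin{itemize}
\item \textbf{Busch-Gleason.} Every multipartite process functional comes from a supermap, which has a CJ-representation $S$. Bending wires with the cups and caps turns $S$ into a state $\rho_W$ representing $W$. So every multipartite functional has a process-matrix representation, and the previous theorem's equivalence (fourth item $\Rightarrow$ first item) gives the Busch-Gleason axiom.
\item \textbf{Tomographic closure.} Once every functional is represented by a matrix, the assumed bijection factors through process matrices. It therefore yields $\{\textrm{process matrices}\}\cong\{\textrm{supermaps}\}$, and the reverse direction of the previous theorem gives closure.
\item \textbf{Tomographic distinguishability.} This is the hard step. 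The idea is that two states $\rho\neq\sigma$ with identical statistics on all destructive measurements would, if the bijection is realised by the CJ-correspondence, give two distinct supermaps (distinct CJ-states, by compact closure) mapping to the same process functional, contradicting injectivity.
\end{itemize}

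The main obstacle is the distinguishability step in ($\Leftarrow$). Two issues need care.
\begin{itemize}
\item The states $\rho,\sigma$ must first be realised as (components of) valid process matrices on some one-input type $A\Rightarrow I$. I would use the representation's form for the one-input case. If $\rho,\sigma$ are not normalised, I would use convex mixing with a fixed deterministic process matrix to force validity.
\item ``$\cong$'' must be interpreted as the canonical map being an isomorphism rather than an abstract bijection. Otherwise injectivity says nothing about distinct CJ-states. I would state this interpretation explicitly at the start of the proof.
\end{itemize}
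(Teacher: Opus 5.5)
Your ($\Rightarrow$) direction and your extraction of the Busch-Gleason axiom in ($\Leftarrow$) are exactly the paper's argument. You are also right that ``$\cong$'' must mean the canonical map (supermap $\mapsto$ induced functional) is a bijection. The gap is in the rest of ($\Leftarrow$). The paper's converse only extracts Busch-Gleason and Tomographic Closure and never derives Tomographic Distinguishability, so that step is entirely yours, and your closure step already depends on it.

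Writing the bijection as supermaps $\to$ process matrices $\to$ process functionals tells you only that the first map is injective and the second surjective. The converse of the previous theorem needs the specific state $\rho_E$, built from a tomographically deterministic $E$, to be a supermap. For that you need the second map to be injective, i.e.\ uniqueness of the representing state. Concretely, you must distinguish $\rho_E$ from the bent CJ-state of the supermap $S$ that induces the same functional. Without this you only learn that $E$ has the same statistics as the deterministic process extracted from $S$, not that $E$ itself is deterministic. (The paper's sentence ``every such process matrix is a supermap'' makes the same move without comment.) So distinguishability has to be established before closure.

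Your distinguishability step does not deliver what is needed. For the type $(A\Rightarrow I)\to(I\Rightarrow I)$, the supermaps are exactly the deterministic states of $A$: take $\phi=1_A$ with $X=I$, $X'=A$ in the supermap condition. So for deterministic $\rho\neq\sigma$ your argument works with no appeal to closure. Both are supermaps, and they induce the same functional, because every element $M_x^y$ of an instrument $A\otimes\mathbb{X}\to\mathbb{X}'$ is an element of the destructive measurement $M\circ(1_A\otimes x)$.

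But the states you must separate, process matrices such as $\rho_E$, are generally not deterministic; a quantum process matrix has trace $d_{A_O}d_{B_O}$. Convex mixing cannot repair this. Deterministic scalars are just $1$ (since $\stoch\hookrightarrow\cat{C}^{\mathbf{d}}$ is full), so $p\rho+(1-p)\tau$ with $p>0$ and $\tau$ deterministic can be deterministic only if $e\circ\rho=1$ for every deterministic effect $e$. Alternatively, you could embed $\rho$ as a rescaled component of a deterministic preparation instrument and swap in $\sigma$. The result is only tomographically deterministic, and it is a supermap only by closure, which by the previous paragraph needs distinguishability.

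So the argument is circular as it stands. Injectivity of the canonical map gives you distinguishability of deterministic states only. Lifting this to the non-deterministic states that the closure step has to separate is the missing idea, and the paper's proof does not supply it either.
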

\begin{proof}
By the previous theorem, Tomographic Closure implies
\[
\{ \textrm{Multipartite Process Matrices} \} \cong \{ \textrm{Supermaps of type $S: \bigtimes_i (A_i \Rightarrow A_i') \rightarrow (I \Rightarrow I)$} \}.
\]
By the process-matrix representation theorem, the Busch-Gleason axiom implies
\[
\{ \textrm{Multipartite Process Functionals} \} \cong \{ \textrm{Multipartite Process Matrices} \}.
\]
Conversely, if
\[
\{ \textrm{Multipartite Process Functionals} \} \cong \{ \textrm{Supermaps of type $S: \bigtimes_i (A_i \Rightarrow A_i') \rightarrow (I \Rightarrow I)$} \},
\]
then every multipartite process functional has a process-matrix representation, meaning that the Busch-Gleason axiom must hold, and every such process matrix is a supermap, meaning that Tomographic Closure holds. This completes the proof. 
\end{proof}

This in turn leads us to a concrete characterisation of multipartite process functionals in classical theory, quantum theory, and Boxworld. 
\begin{theorem}
The following characterisations of multipartite process functionals hold:
    \begin{itemize}
        \item Multipartite process functionals on classical theory = classical supermaps / stochastic processes,
        \item Multipartite process functionals on quantum theory = quantum supermaps / process matrices,
        \item Multipartite process functionals on $\mathbf{Box}$ = cwNSWSE Boxworld tensors. 
    \end{itemize}
\end{theorem}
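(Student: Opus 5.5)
The plan is to derive the three characterisations from the preceding theorem, which shows that for a generalised theory with channel-state duality, the Busch-Gleason axiom together with tomographic distinguishability and tomographic closure gives multipartite process functionals $\cong$ supermaps of type $\bigtimes_i (A_i \Rightarrow A_i') \rightarrow (I \Rightarrow I)$. Once the three hypotheses are verified for each theory, we read off the known characterisation of supermaps on that theory. Channel-state duality has already been asserted for classical theory, quantum theory and $\mathbf{Box}$, and tomographic closure was noted as easily confirmed. So the work is to check the Busch-Gleason axiom and tomographic distinguishability, and then cite the supermap characterisations.

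For classical theory the checks are immediate. A frame function on destructive measurements $E: n \rightarrow \mathbb{X}$, i.e.\ stochastic matrices, is determined by its values on the rows $E^y$. Convexity and coarse-graining make it linear on the cone of effects, so evaluating on the indicator effects $\delta_i$ gives a probability vector $p$ with $f(E^y)=E^y\circ p$. Distinguishability holds because the identity measurement separates distinct probability vectors. Supermaps with trivial output are then stochastic processes; this agrees with the first bullet of the product-input theorem via the Yoneda lemma.

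For quantum theory, the Busch-Gleason axiom is exactly Busch's theorem \cite{Busch_2003} applied to the quantum part of a hybrid system. For a hybrid system $(A,n)$ we combine it with the classical argument over the $n$ sectors, so that the representing state is a block-diagonal density operator. Distinguishability holds by informationally complete POVMs. Quantum supermaps with trivial output are, by the standard linear-algebraic characterisation (recalled in the appendix), precisely the process matrices of \cite{oreshkov}.

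For $\mathbf{Box}$, the states are the no-signalling channels, and effects are non-negative tensors that are normalised on them. We extend a frame function $f$ linearly to the span of effects, using convexity, coarse-graining and the fact that the unit effect is fixed. Since the no-signalling polytope is finite-dimensional with dual cone generated by the effects, the resulting normalised positive functional on the effect cone is represented by an element of the double dual. That element is a no-signalling channel $\rho_f$. Distinguishability follows because distinct no-signalling behaviours differ on some product of local fiducial measurements.

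We then invoke the result of \cite{wilson2026supermapsgeneralisedtheories} that CJ-supermaps on $\mathbf{Box}$ are the cwNSWSE tensors. The Yoneda lemma identifies these with categorical supermaps, which yields the third bullet.

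The main obstacle is the Busch-Gleason axiom for $\mathbf{Box}$. One must show that every normalised positive functional on effects is represented by a genuine state, and not merely by a vector in the ambient space. This requires the state space to be exactly the dual of the effect cone, i.e.\ that $\mathbf{Box}$ is non-restricted, including on composite systems $\otimes_i(a_i\rightarrow a_i')$. I would first try to verify this directly: the deterministic effects are the tensors contracting every no-signalling channel to $1$, and all non-negative sub-normalised tensors are allowed as effects. The duality could fail on composites with the red classical wires included, and then a finer argument about $\mathbf{Box}$'s effects would be needed.
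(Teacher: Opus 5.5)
Your overall structure is the paper's: verify the hypotheses of the preceding characterisation theorem for each theory, then read off the known supermap characterisations via the Yoneda lemma and the concrete description of CJ-supermaps. You also check tomographic distinguishability, which that theorem as stated requires but which the paper's proof of this statement skips.

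\textbf{Where you differ: the Busch-Gleason axiom for $\mathbf{Box}$.} The paper does not argue by cone duality. It unpacks a $\mathbf{Box}$ destructive measurement on $(a\rightarrow a')$ as a classical tester (a classical supermap in the multipartite case). A frame function then becomes a process functional on classical instruments containing the non-signalling scenario. The classical Busch-Gleason axiom and the representation lemma give a representing classical tensor, and normalisation against all deterministic combs forces it to be a non-signalling channel, i.e.\ a $\mathbf{Box}$ state. That route reuses the paper's own machinery. Yours is self-contained convex geometry and makes the multipartite case explicit rather than ``by identical reasoning''.

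\textbf{The obstacle you flag does resolve.}
\begin{itemize}
\item Every non-negative tensor on $\otimes_i(a_i\rightarrow a_i')$ is, after rescaling, dominated by the deterministic effect $u\otimes\mathbf{1}$ with $u$ uniform on inputs. So the effect cone is the full non-negative orthant. Its dual under contraction is again the orthant, so the extended $f$ is a non-negative tensor $\rho$.
\item The deterministic effects contain the strictly positive $u\otimes\mathbf{1}$, so they affinely span all functionals equal to $1$ on the no-signalling polytope. Hence $\langle D,\rho\rangle=1$ for every deterministic $D$ puts $\rho$ in the polytope's affine hull.
\item The polytope is exactly the non-negative part of the affine space cut out by the normalisation and no-signalling equalities. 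So $\rho$ is a non-signalling channel.
\item Classical wires are the systems $(1\rightarrow x)$, so composites with them are already of this form.
\end{itemize}

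\textbf{A small correction.} $\mathbf{Box}$ has no single unit effect. Your linear extension only needs some deterministic effect dominating $T_1+T_2$. It is precisely the plurality of deterministic effects that enforces no-signalling of $\rho$.
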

\begin{proof}
Note that classical theory, quantum theory, and Boxworld each satisfy the Busch-Gleason axiom and Tomographic Closure. As a result, multipartite process functionals on classical theory, quantum theory, and Boxworld are equivalent to the corresponding categorical supermaps. The Yoneda lemma for categorical supermaps and the concrete characterisation of CJ-supermaps from \cite{wilson2026supermapsgeneralisedtheories} entail that supermaps on each of these three theories are indeed classical supermaps / classical processes, quantum supermaps / process matrices, and the cwNSWSE Boxworld tensors, respectively. This concludes the proof. 
\end{proof}

\section{Discussion}
Previous attempts to give a general approach to defining higher-order processes on finite-dimensional theories beyond quantum mechanics have focussed on compositional structure, effectively taking the supermap perspective. In this paper we considered a generalisation of the process-matrix perspective, in which the primary focus is shifted to the extraction of probability distributions from spatiotemporally constrained instrument sets interpreted as modelling local laboratories. 
We worked in the setting of generalised theories, a minimal class of theories supporting both composition and probability, and established some foundational theorems regarding their structure. 

First, we identified process functionals on generalised theories and proved that they can be reframed purely compositionally as naturality (commutation). Because of this, every categorical supermap on a generalised theory is a multipartite process functional. 
Next, we asked when every multipartite process functional on a theory has a process-matrix-like representation, and we found that, following the traditional derivation of the process-matrix representation, such theories are characterised by the simpler Busch-Gleason axiom. 
Finally, we brought together the process-matrix representation theorem and the connection between categorical supermaps and process functionals, characterising those theories on which multipartite process functionals are equivalent to supermaps as those satisfying both the Busch-Gleason axiom \textit{and} Tomographic Closure. 

As a consequence, the results of this paper establish basic conditions on generalised theories, namely the Busch-Gleason axiom and Tomographic Closure, under which there is no ambiguity whatsoever regarding the right notion of higher-order process. Theories included under this umbrella include classical theory, quantum theory, the theory of classical boxes (a Boxworld-like theory). 
Across a broad spectrum of general theories, then, this yields a clean framework for studying the fundamental limits of causality beyond finite-dimensional quantum mechanics.

 \section*{Acknowledgments}

 	This work has been partially funded by the French National Research Agency (ANR) within the framework of ``Plan France 2030'', under the research projects EPIQ ANR-22-PETQ-0007 and HQI-R\&D ANR-22-PNCQ-0002. 
The authors are grateful to Jessica Bavaresco, Cyril Branciard, Kuntal Sengupta, Alessandro Bisio, Luca Apadula, Zixuan Liu, and Anna Jencova for useful discussions.

\bibliographystyle{plainnat}
\bibliography{bibliography}

\appendix

\section{Commutation functionals and process functionals}

In this section, we will prove that commutation functionals (strong natural transformations over classical variables) are equivalent to the traditional formulation of process functionals. 
\begin{theorem}
On any generalised theory $\cat{C}$:
\[ \textrm{Process functionals = Commutation functionals }. \]
\end{theorem}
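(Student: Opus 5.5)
We prove the two inclusions separately. Throughout, we use that a parameterised instrument set $\mathcal{M}$ is a sub-strong profunctor of $(A \Rightarrow_{inst} A')(-,=)$. Concretely, it is closed under pre-composition with stochastic maps on the setting wire, post-composition with stochastic maps on the outcome wire, and tensoring with an identity on an extra classical wire. We also use that every classical comb in $\stoch$ factors as such a composite, namely a stochastic pre-processing $\mathbb{X} \to \mathbb{Y}\otimes \mathbb{Z}$, the instrument with $\mathbb{Z}$ passing through, and a stochastic post-processing $\mathbb{Y}'\otimes\mathbb{Z} \to \mathbb{X}'$. Strong naturality therefore reduces to three conditions, each checked on a generating class of stochastic maps:
\begin{itemize}
\item naturality in the input variable,
\item naturality in the output variable,
\item compatibility with the strength.
\end{itemize}

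\textbf{From process functionals to commutation functionals.} Given $W$, define $W_{\mathbb{X}\mathbb{X}'}(M)$ to be the conditional distribution $x \mapsto (W(M_x^{y}))_y$. Causality ensures this lies in $\stoch(\mathbb{X},\mathbb{X}')$.

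For output naturality, take a stochastic $g$. Then $(g\circ M)_x^{y'} = \sum_y g(y'|y) M_x^y$. Decompose $g$ into a convex combination of deterministic functions $g_k$ (every finite stochastic matrix admits such a decomposition). Convexity of $W$ reduces the claim to each $g_k$, and for a function the claim is the coarse-graining computation $W_{M^{g^{-1}}} = g^{-1}\circ W_M$ already done in the text.

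For input naturality we use the same decomposition, now of the stochastic map $f$. For a function $f$ the claim is the computation $W_{M_f} = W_M\circ f$. For a general $f$, $(M\circ f)_x^y = \sum_{x'} f(x'|x)M_{x'}^y$, and convexity in the setting index handles this directly.

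For the strength, tensoring with $1_{\mathbb{Z}}$ gives the instrument with components $(M\otimes 1)_{x,z}^{y,z'} = \delta_{zz'}M_x^y$. Non-contextuality evaluates this to $W(M_x^y)$ when $z=z'$. When $z\neq z'$ the component is the zero process, which has value $0$ because it arises as a coarse-graining of $0+0$. Hence $W_{\mathbb{X}\otimes\mathbb{Z},\mathbb{X}'\otimes\mathbb{Z}}(M\otimes 1)=W(M)\otimes 1_{\mathbb{Z}}$.

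\textbf{From commutation functionals to process functionals.} Given a strong natural $W$, define $W(M_x^y) := W_{\mathbb{X}\mathbb{X}'}(M)(y|x)$. Causality is immediate because the codomain is $\stoch$. The main obstacle is \emph{non-contextuality}: if $M_a^b = N_{a'}^{b'}$ for different instruments, why do the values agree? We would try reducing both to a common instrument.

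First, naturality in the input along the point $a: I \to \mathbb{X}$ replaces $M$ by the single-setting instrument $M_a$. Naturality in the output along the two-outcome coarse-graining $\mathbb{X}'\to\{0,1\}$ that sends $b\mapsto 0$ and everything else to $1$ yields the two-outcome instrument $(M_a^b,\ \sum_{y\neq b}M_a^y)$, with the value at $0$ unchanged. Doing the same for $N$ produces two binary instruments that share the first element but may differ in the second. So we need that the value on a binary instrument depends only on its first element.

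To get this, form the instrument with a setting bit $s$ that, via classical control, performs $M$ when $s=0$ and $N$ when $s=1$. This is the convex/classical-control closure of $\mathcal{M}$, obtained by pre-composition with stochastic maps together with the strength. Naturality under the deterministic classical pre-processing $s\mapsto$ ``run with $s$'' together with post-processing that copies $s$ into the outcome lets us compare. The step we expect to be delicate is showing that a comb can identify the two equal components across the two settings. We would try the strength: pass $\mathbb{Z}$ through and relabel outcomes conditioned on $z$, then use that the resulting processes are literally equal morphisms, so $W$ (being a function on morphisms) assigns them equal values. If this identification cannot be made, we fall back on using convexity: the value of $\tfrac12(M_a^b+N_{a'}^{b'})$ equals both $W(M_a^b)$ and $W(N_{a'}^{b'})$, via naturality along mixing the setting bit.

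Finally, convexity and coarse-graining follow by applying naturality to the mixing pre-processing $I\to\{1,2\}$, $(p,1-p)$, and to coarse-graining post-processings. The two constructions are mutually inverse by definition.
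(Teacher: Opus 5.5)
Your first direction is sound and is essentially the paper's argument: decompose classical channels into convex mixtures of functions, then use the relabelling and coarse-graining computations. Getting the strength directly from non-contextuality, where the paper derives it from naturality, is a harmless variation. One small repair is needed. The convexity axiom only mixes settings of a single instrument, so to split $W(g\circ M)$ over the $g_k$ you should first assemble the $g_k\circ M$ into the single instrument $(\tilde g\otimes 1)\circ(M\otimes 1_{\mathbb{K}})$ with $\tilde g(y,k)=g_k(y)$, and then mix the control $k$. This is exactly the paper's ``controlled channel with a probability distribution on the control wire''.

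The genuine gap is non-contextuality in the converse direction, which is the whole content of that direction; the paper imports it from the non-contextual assignment lemma of the cited work on supermaps on generalised theories. Your reduction to two single-setting binary instruments $(E,F_1)$ and $(E,F_2)$ is correct. Here $E=M_a^b=N_{a'}^{b'}$, $E+F_1=D_1:=\sum_y M_a^y$ and $E+F_2=D_2:=\sum_y N_{a'}^y$. Neither of your proposed finishes works, however.

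\begin{itemize}
\item Relabelling outcomes conditioned on the setting only yields one instrument $R$ with $R_0^0=R_1^0=E$ and different complements. No comb relates its two settings, so naturality says nothing about $W(R)(0|0)$ versus $W(R)(0|1)$.
\item The fallback is circular. Mixing the setting bit gives $(E,\tfrac12(F_1+F_2))$, whose value is just the average of the two numbers you want to compare.
\item The instrument running $M$ or $N$ according to $s$ is not generated by stochastic pre-composition and the strength, since these act on one instrument at a time. You need the classical-control axiom of a generalised theory, which is available for the full set $(A\Rightarrow_{inst}A')$. Without such closure the claim can fail. Take the sub-profunctor generated by two such instruments with $D_1\neq D_2$ and both $\{E,F_1\}$ and $\{E,F_2\}$ linearly independent. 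There the values at $E$ can be prescribed independently on the two comb-orbits.
\end{itemize}

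The missing idea is to pad with a zero outcome. By classical control, let $P$ run $(E,F_1)$ on setting $1$ and $(0,D_2)$ on setting $2$, and let $Q$ run $(0,D_1)$ and $(E,F_2)$. Feeding a fair coin into the setting gives literally the same morphism for both: $\tfrac12E$ on outcome $0$, and $\tfrac12(F_1+D_2)=\tfrac12(D_1+F_2)$ on outcome $1$. Input naturality therefore gives
\[
W(P)(0|1)+W(P)(0|2)=W(Q)(0|1)+W(Q)(0|2).
\]
Since $(0,D_2)$ is $N_{a'}$ followed by the constant map onto outcome $1$, output naturality kills $W(P)(0|2)$; symmetrically it kills $W(Q)(0|1)$. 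Input naturality along points, together with your reduction, then identifies $W(P)(0|1)=W(M)(b|a)$ and $W(Q)(0|2)=W(N)(b'|a')$. That is the non-contextuality you need, after which the rest of your converse goes through.
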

\begin{proof}

Let us first prove that process functionals define strong natural transformations; the remaining parts of the proof are either trivial or already present in the literature. Given an abstract process functional, we can prove commutativity on the bottom wire as follows. We see that for all $x,y$,
\[   \input{figs/w_com_1.tikz} \ = \ \sum_{v} f_v  \    \input{figs/w_com_2.tikz}  \ = \  \sum_{v} f_v  \   \input{figs/w_com_3.tikz}   \] 
\[ \ = \   \input{figs/w_com_5.tikz}  \ = \   \input{figs/w_com_6.tikz}   \] 
\[  \ = \   \input{figs/w_com_7.tikz} \ = \   \input{figs/w_com_8.tikz} ,  \] 
and so
\[   \input{figs/w_com_f1.tikz} \ = \   \input{figs/w_com_f2.tikz} .  \] 
Using commutation with any pure function on the top wire, we can prove commutativity with an arbitrary classical channel. Indeed, note that any classical channel is a convex combination of pure channels, and that any convex combination of pure channels can be re-written as a controlled channel with a probability distribution on the control wire, giving that 
\[ \ = \   \input{figs/w_top_1.tikz}  \ = \   \input{figs/w_top_2.tikz}   \ = \   \input{figs/w_top_4.tikz}   \] 
\[ \ = \  \sum_{i} g_i \  \input{figs/w_top_5.tikz}  \ = \   \sum_{i} g_i \  \input{figs/w_top_6.tikz}   \ = \   \sum_{i} g_i  \ \input{figs/w_top_7.tikz}   \] 
\[ \ = \   \sum_{i} g_i  \ \input{figs/w_top_8.tikz}  \ = \   \sum_{i} g_i  \  \input{figs/w_top_8b.tikz}  \ = \   \input{figs/w_top_9.tikz}    \] 
Finally, we note that any natural transformation on an instrument is automatically strong, since \[   \input{figs/pm_thm_20.tikz} \ = \   \input{figs/pm_thm_21.tikz}  \ = \   \input{figs/pm_thm_22.tikz},   \] from which we can conclude that \[   \input{figs/pm_thm_23.tikz} \ = \   \input{figs/pm_thm_24.tikz}.   \] 

To complete the proof, note that by the proof method of \cite{wilson2026supermapsgeneralisedtheories} for the non-contextual assignment lemma (which, when closely examined, can be seen to leverage only commutativity over classical wires), every strong natural transformation is non-contextual on probability distributions, where we define $M_{p}^{q}$ with $p,q$ probability distributions over the basic indices $p = \{  (p_x,x)  \}$ and $q = \{ (q_y, y) \}$ by
 \[M_{p}^{q} = \sum_{x,y}  p_xq_y M_{x}^{y}. \]
A non-contextual-on-distributions assignment is then naturally a function $W_{M}(q|p) := W(M_{p}^q)$ which is again normalised in the sense that $\sum_{y} W(M_{p}^y) = 1$. Finally, any non-contextual assignment of this kind is automatically non-contextual on point distributions, and is furthermore convex, since \[  \sum_{x}  p_x W_{M}(y|x) = W_{M}(y|\sum_{x}  p_x x)  = W(M_{\sum_{x}  p_x x}^{y}) = W(\sum_{x}  p_x M_{x}^{y}) . \] Note that normalisation follows by definition, and compatibility with coarse-graining follows for free (via commutation with pure functions).

\end{proof}

\section{Concrete representation lemmas}
In this section we establish and general Lemma which will be used for a variety of purposes, demonstrating that product-input process functionals on $\mathbf{Box}$ are equivalent to Boxworld tensors, demonstrating that $\mathbf{Box}$ satisfies the Busch-Gleason axiom, and giving a process matrix representation theorem for process functions of theories supporting the Busch-Gleason axiom.
\begin{lemma}[Process Matrix Representation for Process Functionals]
In any generalised theory satisfying
\begin{itemize}
\item Channel-state duality,
\item The Busch-Gleason axiom,
\end{itemize}
for any pair of instrument sub-profunctors $\mathcal{M}_i$ which contains all non-signalling scenarios  \[   \input{figs/in_instrument_1.tikz} \ \in \ \mathcal{M}_i , \]
and for any product process functional $W$ on $\mathcal{M}_1,\mathcal{M}_2$ one can construct a corresponding product frame functional, that is, a function on pairs of destructive measurements satisfying  \begin{align*}    
 \input{figs/new_lot_3p.tikz} \ = \  \input{figs/new_lot_6p.tikz}. 
\end{align*} 
\end{lemma}
\begin{proof}
From any process functional on an instrument set which includes the above non-signalling scenario, we can an assignment from products of destructive measurements to probability distributions which commutes with classical post-processing via \[   \input{figs/new_lot_0p.tikz} \ := \ \frac{\input{figs/new_lot_1p.tikz}}{\input{figs/new_lot_2p.tikz}}.   \] This assignment indeed commutes with post-composition since
 \begin{align*}    
 \input{figs/new_lot_3p.tikz} \ &= \ \frac{\input{figs/new_lot_4p.tikz}}{\input{figs/new_lot_5ap.tikz}} \\
&  = \  \frac{\input{figs/new_lot_4bp.tikz}}{\input{figs/new_lot_5zp.tikz}}  \\
& = \  \frac{\input{figs/new_lot_4bp.tikz}}{\input{figs/new_lot_5cp.tikz}} \\
& = \  \input{figs/new_lot_6p.tikz}. 
\end{align*} 
\end{proof}
This lemma has then three consequences. 

\begin{lemma}[Process Matrix Representation for Process Functionals]
In any generalised theory satisfying
\begin{itemize}
\item Channel-state duality
\item The Busch-Gleason axiom
\end{itemize}
Consider any instrument sub-profunctor $\mathcal{M}$ which contains the non-signalling scenario  \[   \input{figs/in_instrument_1.tikz} \ \in \ \mathcal{M}  \]
Every abstract process functional $W$ on $\mathcal{M}$ has a process-matrix representation, furthermore, if the generalised theory is tomographically distinguishable, the representation is unique. 
\end{lemma}
\begin{proof}
We consider the one-input specialisation. In that case, the Busch-Gleason axiom holds, and so we have that there exists a representative state $\rho_{CuW}$ such that  \[   \input{figs/new_lot_7.tikz} \ = \frac{  \input{figs/new_lot_1.tikz}}{ \input{figs/new_lot_x.tikz}}.  \]  noting that  \[   \input{figs/recover_denom_2.tikz} \ = \ \input{figs/recover_denom_3.tikz}.  \] and so by non-contextual assignment  \[   \input{figs/recover_denom_1.tikz} \ = \ \input{figs/recover_denom_4.tikz} \ = \ 1 , \] 
 we now test the state on a general instrument element and find that (where the above calculation shows that the denominator will be $1$)
 \[   \input{figs/recover_pm_1.tikz} \ = \ \input{figs/recover_pm_2.tikz} \ = \ \input{figs/recover_pm_3.tikz} \ = \   \input{figs/recover_pm_4.tikz}   \]  
 and so using the non-contextuality of assignments a final time,
 \[  = \  \input{figs/recover_pm_5.tikz}  .\] 
 To prove uniqueness, consider any pair of process matrices $\rho_{CuW}, \sigma_{CuW}$ that represent the same process functional, then we have that for every POVM $E$ then  \[   \input{figs/new_lot_7.tikz} \ = \   \input{figs/unique_2.tikz}  \ = \   \input{figs/unique_3.tikz}  \ = \   \input{figs/unique_4.tikz},  \] and so indeed by tomographic distinguishability $\rho_{CuW}, \sigma_{CuW}$. 
 \end{proof}

\begin{lemma}
The theory $\mathbf{Box}$ satisfies the Busch-Gleason axiom. 
\end{lemma}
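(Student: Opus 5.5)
Every frame function on a $\mathbf{Box}$ system comes from a single non-signalling box, obtained by evaluating $f$ on destructive measurements that simply query the box. A system of $\mathbf{Box}$ is $\bigotimes_{i=1}^n (a_i \rightarrow a_i')$. A destructive measurement $E: \bigotimes_i (a_i\rightarrow a_i') \rightarrow \mathbb{X}$ is a non-negative tensor $E^{y}{}^{a_1\dots a_n}_{a_1'\dots a_n'}$ whose contraction with any non-signalling channel $M^{a_1'\dots a_n'}_{a_1\dots a_n}$ is a probability distribution over $y$. The state $\rho_f$ will be a non-signalling channel $P(a_1'\dots a_n'|a_1\dots a_n)$. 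It is fixed by evaluating $f$ on the \emph{basic query measurements} and then extended to arbitrary measurements using the commutation and convexity properties established earlier (process functionals $=$ commutation functionals).

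\textbf{Step 1: define $P$ on query measurements.} For each fixed input string $\vec a$, the tensor $Q_{\vec a}$ with components $Q_{\vec a}^{\vec a'} := \delta_{\vec a}\,\delta^{\vec a'}$ is a destructive measurement with outcome set $\prod_i a_i'$. Contracting it with a non-signalling channel gives $M(\cdot|\vec a)$, which is normalised. Set $P(\vec a'|\vec a) := f(Q_{\vec a}^{\vec a'})$. This is normalised for each $\vec a$ because $f$ is a frame function.

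\textbf{Step 2: $P$ is non-signalling.} Coarse-graining the outcomes $a_i'$ of $Q_{\vec a}$ yields the tensor $\delta_{\vec a}\,\delta^{\vec a'_{\setminus i}}$ summed over $a_i'$. I will show this equals a convex or coarse-grained combination of elements of $Q_{\vec a}$ with $a_i$ changed, as effects on the state space. The identity to use is that on non-signalling states the marginal $\sum_{a_i'}M(\vec a'|\vec a)$ is independent of $a_i$. So the two effects $\sum_{a_i'} Q_{\vec a}^{\vec a'}$ and $\sum_{a_i'} Q_{\vec a[a_i\mapsto b_i]}^{\vec a'}$ coincide as elements of $\mathbf{Box}$ restricted to the state space.

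The subtle point is whether non-contextuality is with respect to equality as morphisms in $\mathbf{Box}$ or as functionals on states. Morphisms in $\mathbf{Box}$ are raw tensors, so these two effects are distinct morphisms. I therefore will not rely on contextual equality. Instead I build the \emph{compound} measurement $R$ that first samples $a_i$ uniformly, or from a classical control, and records it. Commutation with classical pre-processing (controlled measurements via equation \eqref{eq:control}) together with coarse-graining then gives $\sum_{a_i'}P(\vec a'|\vec a)$ independent of $a_i$.

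This is where I expect the main obstacle. It may instead be necessary to use that, for a wiring of local measurements, outcome labels can be relabelled. Concretely, a measurement that ignores party $i$ can be written by classically controlling the query at $i$ and then discarding its outcome. Since discarding factors through the counit, the resulting instrument is literally the same tensor for each control value. Non-contextuality then gives equality of the $f$-values.

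\textbf{Step 3: extend to all measurements.} A general destructive measurement $E$ is a non-negative tensor normalised on every non-signalling channel. I will decompose each effect $E^y$ into query components, $E^y = \sum_{\vec a,\vec a'} E^{y\,\vec a}_{\vec a'}\, Q^{\vec a'}_{\vec a}$. Each $E$ is then realised from query measurements by classical pre-processing of the input string (a stochastic choice of $\vec a$) and post-processing of $(\vec a,\vec a')$ to $y$. The required condition $E\circ\rho_f$ normalised is exactly what makes these coefficients define a valid classical comb, modulo terms that vanish on non-signalling states. I would handle those terms by the same marginal-independence argument as Step 2, which requires a Farkas-type decomposition of normalised effects on the non-signalling polytope. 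By the commutation-functional characterisation, $f(E^y)=\sum E^{y\vec a}_{\vec a'} P(\vec a'|\vec a) = E^y\circ\rho_f$. This establishes the Busch--Gleason axiom for $\mathbf{Box}$.
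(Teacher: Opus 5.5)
Your single-party argument works. On $(a\to a')$ every deterministic effect satisfies $\sum_y E^{ya}_{a'}=p(a)$ (test it on all functions $a\to a'$). So $E^{ya}_{a'}=p(a)K(y|a,a')$ is literally ``sample $a$, query, post-process'', and commutation plus homogeneity give $f(E^y)=\sum_{a,a'}E^{ya}_{a'}P(a'|a)$.

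The gap is the multipartite case, in both Steps 2 and 3.

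\textbf{Step 3.} A realisation by a stochastic choice of $\vec a$, followed by the query and post-processing, forces $\sum_y E^{y\vec a}_{\vec a'}$ to be independent of $\vec a'$. This fails for legitimate measurements such as the adaptive wiring $u^{ab}_{a'b'}=\delta_{a,a_0}\delta_{b,g(a')}$. That wiring is deterministic, because $\sum_{b'}P(a'b'|a_0,b)$ does not depend on $b$ for no-signalling $P$. As you note yourself, non-contextuality only identifies equal tensors. So you cannot discard ``terms that vanish on non-signalling states'', and deferring to Step 2 and a Farkas-type decomposition does not repair this.

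\textbf{Step 2.} There is no working argument here. Controlling $a_i$ and recording it yields distinct tensors for distinct values of $a_i$, and commutation with pre-processing never identifies them. Your fallback is false: feeding $a_i=\alpha$ and discarding $a_i'$ gives the tensor $\delta_{a_i,\alpha}\mathbf{1}_{a_i'}$, which depends on $\alpha$. Deterministic effects on $(a\to a')$ are not unique as tensors.

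Two ideas close these gaps.

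\emph{Rescale instead of realising $E$ exactly.} The uniform query measurement $U$ samples $\vec a$ uniformly among its $D$ values, queries, and records $(\vec a,\vec a')$. It is deterministic, and its elements are literally $\frac1D Q^{\vec a'}_{\vec a}$. For $\lambda>0$ small enough, $K(y|\vec a,\vec a'):=\lambda D\,E^{y\vec a}_{\vec a'}$, completed by a junk outcome, is stochastic. Then $(K\circ U)^y=\lambda E^y$ as tensors, so commutation and homogeneity give $f(E^y)=\sum_{\vec a,\vec a'}E^{y\vec a}_{\vec a'}P(\vec a'|\vec a)$ for every measurement, with no Farkas lemma.

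\emph{Get no-signalling from adaptive measurements.} The outcome-$(\alpha',\beta')$ element of the wiring above is literally $Q^{\alpha'\beta'}_{a_0,g(\alpha')}$. Normalisation of $f$ therefore gives $\sum_{\alpha'}m(\alpha'|a_0,g(\alpha'))=1$ for every $g$, where $m(a'|a,b):=\sum_{b'}P(a'b'|a,b)$. Comparing a constant $g$ with one changed at a single point shows $m$ is independent of $b$. Wirings that query a subset of parties first give the general condition.

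This is in effect the paper's route. It reads $\mathbf{Box}$ effects as classical instruments containing the non-signalling scenario and applies the classical process-matrix representation. There the cup (the uniform correlated preparation, up to a scalar) plays the role of $U$, and the snake equation supplies the literal tensor identity. The paper then notes that normalisation on all deterministic combs forces the representing state to be a no-signalling channel.
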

\begin{proof}
Consider the Busch-Gleason axiom on $\mathbf{Box}$, it types as requiring any family of functions satisfying  \[    \tikzfigscale{0.7}{figs/busch_gleason_1p} \ = \ \tikzfigscale{0.7}{figs/busch_gleason_2p} \] is represented by evaluation on a state $\rho_{f}: 1 \rightarrow (A \Rightarrow A')$. Now note that unpacking up to isomorphism in classical theory, we can write  \[    \tikzfigscale{0.7}{figs/unpack_1} \ \cong \ \tikzfigscale{0.7}{figs/unpack_2}. \]  
Since this indeed contains the non-signalling scenario and classical theory satisfies the Busch-Gleason axiom respectively, we have that there exists a state $\rho$ satisfying  \[    \tikzfigscale{0.7}{figs/unpack_3} \ \cong \ \tikzfigscale{0.7}{figs/unpack_4}. \] Incidentally, this imposes that $\rho_{CuW}$ must be a channel (being normalised by all deterministic combs), and so, must be a state in $\mathbf{Box}$. The multipartite case holds by identical reasoning, identifying $E$ with classical supermaps, forcing the representative state $\rho{CuW}$ to be a non-signalling classical, and so, a multipartite state in $\mathbf{Box}$. 
\end{proof}

\begin{lemma}
The following characterisations hold for product-input process functionals:
    \begin{itemize}
        \item Product-input process functionals on classical theory = classical stochastic processes
        \item Product-input process functionals on quantum theory = POPT tensors
        \item Product-input process functionals on Boxworld = Boxworld process tensors. 
    \end{itemize}
\end{lemma}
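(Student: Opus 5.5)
The plan is to treat the three items uniformly. A product-input process functional $W(M_x^y,N_{x'}^{y'})$ is convex and compatible with coarse-graining in each argument, and normalised on product instruments. The first step is to use the earlier product frame-functional lemma to turn $W$ into a function on pairs of destructive measurements that commutes with classical post-processing in each slot. This reduction uses only the non-signalling scenarios built from product instruments, which are available in the product setting.

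Next I would show that such a separately-natural pair-assignment is represented by a multilinear functional. Fix the second argument. The first slot is then a frame function, but only up to normalisation: the weight $W(\,\cdot\,,N_{x'}^{y'})$ sums to $W$ of the marginal. After normalising, and using linear extension on the span of effects together with channel-state duality (cups and caps), the first slot becomes a linear functional. Repeating this in the second slot yields a bilinear form. By compact closure this bilinear form is an element $\rho_W$ of the real span of $A\otimes A'^*\otimes B\otimes B'^*$, evaluated as $\rho_W$ contracted with $M_x^y\otimes N_{x'}^{y'}$. Crucially, $\rho_W$ is \emph{not} required to be a state; it is only positive on products of effects.

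The three cases then go as follows.
\begin{itemize}
\item \textbf{Classical theory.} The product of the cones of nonnegative vectors is the nonnegative orthant of the tensor product, so $\rho_W$ is a genuine nonnegative tensor. Normalisation on all product instruments makes it a stochastic process, which is the Yoneda-lemma statement cited in the main text.
\item \textbf{Quantum theory.} This is Barnum et al.'s generalised Gleason theorem. The bilinear extension is positive on pure tensors, hence POPT, and normalisation gives exactly the POPT process tensors. Conversely, every POPT tensor normalised on products defines a functional.
\item \textbf{$\mathbf{Box}$.} Local effects are classical non-signalling objects. Unpack each slot as in the proof that $\mathbf{Box}$ satisfies Busch-Gleason: an instrument on $(a\to a')$ is, up to isomorphism, a classical comb. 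The bilinear functional is then a real tensor $T$, nonnegative on products of deterministic local combs. Normalisation against all product classical combs gives precisely the defining conditions of Boxworld process tensors, positivity plus wNSWSE-type normalisation. Conversely, each such tensor evidently defines a product functional.
\end{itemize}

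I expect the main obstacle to be the $\mathbf{Box}$ case. The step that needs checking is that positivity on products of local classical instruments, without any extension, forces entrywise nonnegativity of $T$ and exactly the Boxworld process-tensor normalisation. The plan is to test $T$ on deterministic point instruments: pick local function-valued settings and delta outcomes, so that individual entries of $T$ are isolated. The normalisation identities should then follow from summing over outcomes for every local deterministic strategy.
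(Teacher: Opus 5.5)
Your skeleton is close to the paper's. You start from the same product frame-functional lemma. For the classical and quantum items you rely on the same results the paper cites (the Yoneda lemma and Barnum et al.). For $\mathbf{Box}$ you use the same unpacking of each slot into classical combs.

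The genuine difference is how the joint representing object is obtained. The paper fixes $E_1$ and views the assignment as a distribution-valued frame function in $E_2$. It deduces this distribution-valued form of Busch--Gleason from the scalar classical one via the inclusion $\le:(A\Rightarrow A')\to A\otimes A'$, and applies the same trick on both slots. A representing state $\rho_{CuW}$ then comes out with positivity built in. You instead build a bilinear form by separate linear extension and establish entrywise nonnegativity afterwards by testing on point instruments.

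Both routes rest on the fact that products of classical effects generate the joint nonnegative orthant. Yours is more uniform and shows directly why the quantum case yields only POPT. The paper's obtains the representing object as a state in one step. Your point-instrument test does work: every elementary tensor occurs, up to scaling, as an element of a deterministic local comb with deterministic settings and post-selected outcomes.

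Two steps need repair.

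First, the normalisation in your second step is not available in $\mathbf{Box}$. In classical and quantum theory the discarding effect is unique, so there it is harmless. In $\mathbf{Box}$ deterministic effects are not unique: a comb on $(a\to a')$ may feed in any input distribution $q(a)$. So $\sum_{y_1} f(E_1^{y_1},E_2^{y_2}) = f(u_{E_1},E_2^{y_2})$, where $u_{E_1}$ is the deterministic effect to which $E_1$ coarse-grains, can vary with $E_1$. Dividing by it makes the slot function depend on the whole measurement rather than on the effect, so it is no longer a frame function. Just drop the normalisation: nonnegativity, additivity under coarse-graining and convexity already give a unique linear extension to the span of effects.

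Second, compact closure alone does not give $W(M_x^y,N_{x'}^{y'})$ as the contraction of $\rho_W$ with $M_x^y\otimes N_{x'}^{y'}$. Your bilinear form is defined on destructive measurements, whose values come from $W$ on non-signalling-scenario instruments rather than on the original instrument elements. The paper closes this gap with two further non-contextuality arguments:
\begin{itemize}
\item the normalising denominator of the frame-functional construction equals $1$;
\item via the snake equation, contracting $\rho_{CuW}$ with a general instrument element equals the frame functional on a capped effect, which in turn equals $W$ of that element.
\end{itemize}
You need to supply this. Alternatively, bypass the frame-functional lemma and extend $M\mapsto W(M,N)$ linearly on instrument elements of type $A\to A'$ directly; compact closure then does give the contraction formula.
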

\begin{proof}
For a fixed choice of $E_1$ note that  \[   \input{figs/new_lot_1p.tikz}  \] is a frame function over $E_2$ valued in distributions rather than scalars, meaning it take the form: \[   \tikzfigscale{0.7}{figs/form_1}. \] Now in the specific case of classical tensors, we can deduce this form of Busch-Gleason from the scalar version. Indeed, noting that by non-contextual assignment we have for the inclusion map $\leq : (A \Rightarrow A') \rightarrow A \otimes A'$ then \[   \tikzfigscale{0.7}{figs/form_1} \ = \   \tikzfigscale{0.7}{figs/form_2} . \]  using this same trick (which is not available for quantum systems, in which positivity constraints appear) on the left and right products then gives a representative state $\rho_{CuW}$ such that  \[   \input{figs/new_lot_7p.tikz} \ = \ \input{figs/new_lot_1p.tikz}.  \] 
Again by non-contextual assignment we have that
\begin{align*}
 & \input{figs/recover_denom_1p.tikz} \\
 = \ & \input{figs/recover_denom_4p.tikz} \\
  = \ & 1 . 
 \end{align*} 
and so we find that
 \begin{align*}
   \input{figs/recover_pm_1p.tikz} \ = & \ \input{figs/recover_pm_2p.tikz} \\
    =& \ \input{figs/recover_pm_3p.tikz} \\
     =&  \   \input{figs/recover_pm_4p.tikz} \\
      =&  \  \input{figs/recover_pm_5p.tikz}
\end{align*}

\end{proof}


\section{Supermaps and process matrices}

In this last section we give the converse proofs, which characterise theories which support processes matrix representation and furthermore equivalence between supermaps and process matrices. 
\begin{theorem}
Consider any generalised theory $\cat{C}$ with channel-state duality. The following are equivalent:
\begin{itemize}
\item Busch-Gleason axiom.
\item Every multipartite abstract process functional has a process-matrix representation in the following sense \[   \input{figs/process_thm_1.tikz}  \ = \   \input{figs/process_thm_2.tikz}    \]
\end{itemize}
\end{theorem}

\begin{proof}
We simply must show that the previously introduced non-signalling scenario is included within the multi-party instrument set. Indeed, examine the following \[   \input{figs/process_arrange_2.tikz} \ = \ \input{figs/process_arrange_3.tikz} \]  \[ = \  \input{figs/process_arrange_4.tikz} .   \] As a consequence, we can see by set inclusion $\mathcal{M}^2_{A \leq B} \subseteq \mathcal{M}^2$ that the non-signalling shape is also included within the domain for the definite-order process functionals. 

For the contrary direction, note that any family of functions \[    \input{figs/w_busch_1.tikz}   \] satisfying non-contextuality and commutativity on destructive measurements can be extended to an abstract process functional on destructive instruments in the following way \[    \input{figs/w_busch_3.tikz} \ := \   \input{figs/w_busch_4.tikz}.  \] This is indeed an abstract process functional, which can most easily be seen by checking commutativity:
\[    \input{figs/w_busch_9.tikz} \ = \  \input{figs/w_busch_5.tikz} \ = \   \input{figs/w_busch_6.tikz}      \] 
\[  \ = \   \input{figs/w_busch_7.tikz} \ = \   \input{figs/w_busch_8.tikz}  .  \] 
Therefore, if there exists a process-matrix representation for this abstract process functional, the generalised theory at hand would have to satisfy the Busch-Gleason axiom.
\end{proof}
\begin{theorem}
Consider a generalised theory $\cat{C}$ with channel-state duality:
\[ \textrm{ $\cat{C}$ is tomographically closed}  \]
\[  \iff  \]
\[  \{ \textrm{Multipartite Process Matrices} \}  \cong  \{ \textrm{Supermaps of type $S: \bigtimes_i (A_i \Rightarrow A_i') \rightarrow (I \Rightarrow I)$} \}   \]
\end{theorem}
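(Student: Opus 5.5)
We prove the two directions separately. Throughout, a multipartite process matrix is a state $\rho_W$ of $\bigotimes_i (A_i\otimes A_i')$ (up to the dualisation supplied by channel-state duality) whose evaluation against every element of the $n$-arity instrument set $\mathcal{M}^n$ gives a normalised process functional. A CJ-supermap of type $\bigtimes_i (A_i \Rightarrow A_i') \rightarrow (I \Rightarrow I)$ is a process $S$ in $\cat{C}$ whose contraction with any family of deterministic $\phi_i : A_i\otimes X_i \rightarrow A_i'\otimes X_i'$ is deterministic. By the Yoneda lemma for categorical supermaps we may work with CJ-supermaps throughout. In both cases the underlying object is a process in $\cat{C}$ of the same type, obtained by bending wires with cups and caps. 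The claimed isomorphism is therefore the identity on underlying morphisms, and what must be compared are the two normalisation conditions.

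\textbf{($\Leftarrow$).} First, every supermap is a process matrix. By the lemma that supermaps are process functionals, the supermap $S$ defines a multipartite process functional. Its evaluation on an instrument of $\mathcal{M}^n$ is exactly the contraction of $S$, seen as a state via the cup, with that instrument. Hence $S$ is a representing state, and it is normalised on all of $\mathcal{M}^n$.

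Conversely, let $\rho_W$ be a process matrix. Fix deterministic $\phi_i$ and consider the resulting process $\Phi: \bigotimes X_i \rightarrow \bigotimes X_i'$ obtained by plugging the $\phi_i$ into $\rho_W$. We want to show that $\Phi$ is tomographically deterministic. Compose $\Phi$ with an arbitrary deterministic preparation $\sigma$ on $\bigotimes X_i$ and an arbitrary deterministic measurement $E$ on $\bigotimes X_i'$, possibly with classical settings and outcomes. The resulting diagram is $\rho_W$ evaluated on the instrument built from $\sigma$, the $\phi_i$ and $E$, and this is precisely an element of $\mathcal{M}^n$: the shared past $\sigma$, the local laboratories $\phi_i$ and the shared future $E$. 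Since $\rho_W$ is a process functional there, the result is a deterministic classical channel. So $\Phi$ is tomographically deterministic, and tomographic closure makes it deterministic. Therefore $\rho_W$ is a CJ-supermap.

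The main check in this direction is that the pictures defining $\mathcal{M}^n$ really allow an arbitrary deterministic joint environment, both before and after, with deterministic local instruments inside. If $\mathcal{M}^n$ only allows a shared past, the plan is to absorb the final measurement $E$ into the classical post-processing. A measurement on $\bigotimes X_i'$ is destructive, so composing it with the parallel outputs should still land in $\mathcal{M}^n$ once the $X_i'$ are grouped with the classical wires. If that grouping fails, we use channel-state duality to move $E$ onto the input side as a cap, which is an instrument element up to a scalar, and then renormalise.

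\textbf{($\Rightarrow$).} Suppose process matrices coincide with supermaps. Let $M: A\rightarrow A'$ be tomographically deterministic. Using channel-state duality we build a process matrix from $M$. In the one-party case take $n=1$, with $A_1 = A'$ and $A_1' = A$, and let $\rho$ be the state obtained by bending $M$ together with a cup and a cap. Evaluating $\rho$ on an element of $\mathcal{M}^1$, namely a deterministic preparation into $A'\otimes X$, a local instrument and a measurement, amounts to a closed loop through $M$. Straightening the snake turns this into $M$ sandwiched between a deterministic preparation and a deterministic measurement, which is normalised by tomographic determinism. So $\rho$ is a process matrix and hence a supermap.

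Now evaluate the supermap on the deterministic process $\phi = $ identity $A\rightarrow A$ with environment wires $X = A$ and $X' = A'$, arranged so that the snake returns $M$ itself. Determinism of the output then gives that $M$ is deterministic. The delicate point is the typing: we need to choose $A_1, A_1'$ so that plugging in the identity with the appropriate swap reproduces $M$. This is the step we expect to be the main obstacle. We will handle it using the compact structure together with the physical cups and caps, checking that the scalars from the channel-state duality axioms cancel in both the normalisation and the recovery.
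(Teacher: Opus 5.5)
Your direction from tomographic closure to the equivalence is sound and matches the paper's argument: plug deterministic local processes into $\rho_W$, surround the result with a deterministic shared past and future to get an element of the extended instrument set, and use normalisation of $W$ to get tomographic determinism, which closure upgrades to determinism.

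\textbf{The gap is in the converse.} Your one-party candidate, the Choi state of $M$ with $A_1 = A'$ and $A_1' = A$, is in general not a process matrix. Its underlying process feeds the laboratory's output $A$ through $M$ back into the laboratory's input $A'$. So whenever the local process $\phi$ signals from input to output, evaluating the candidate on $\phi$ closes a causal loop. Straightening the snake does exhibit $M$ between a preparation and a measurement, but these are $(\phi\otimes 1)\circ\cup$ and a cap. Neither is deterministic, so tomographic determinism says nothing about normalisation.

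Concretely, in quantum theory take $M=\mathrm{id}$ on a qubit, which is deterministic and hence tomographically deterministic. Take $\phi=\mathrm{id}$ with trivial environment. The loop then evaluates to the trace of the identity superoperator, $d^2=4\neq 1$. Your own recovery step shows the same problem: if this state were a supermap, plugging in $\phi=\mathrm{id}$ would force the loop of $M$ to be deterministic. So the hypothesis that process matrices coincide with supermaps cannot be applied to your state. The real obstacle is this causal arrangement, not the typing of the recovery step; plugging in the swap does return $M$.

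\textbf{The fix, which is what the paper does, is to use two parties so that no loop can form.} Take the process matrix of type $(I\Rightarrow A)\otimes(A'\Rightarrow I)$ whose underlying process $S: A\otimes I\rightarrow I\otimes A'$ is $M$. The first party prepares the input of $M$ and the second measures its output.

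Every element of $\mathcal{M}^2$ then evaluates to $M$, tensored with the bypassing environment wires, sandwiched between two deterministic pieces:
\begin{itemize}
\item a deterministic preparation: the shared past followed by the first laboratory;
\item a deterministic measurement: the second laboratory followed by the shared future.
\end{itemize}
So normalisation is exactly tomographic determinism. By hypothesis this process matrix is a supermap. Applying it to the laboratories $\phi_1=\mathrm{id}_A$ (from environment $X_1=A$ to the output, $X_1'=I$) and $\phi_2=\mathrm{id}_{A'}$ (from the input to environment $X_2'=A'$) returns $M$ itself, which is therefore deterministic.
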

\begin{proof}
We stick to the $2$-party case for simplicity and note that constructing a process matrix from a supermap is immediate in any generalised theory with channel-state duality; the non-trivial direction is to construct a supermap from a process matrix. Let us consider the application of a process matrix to any element of the extended instrument set in which the right-hand process factors through the identity:
\[   \input{figs/stat_closed_2a.tikz} ,  \]
since $\rho_W$ is a process matrix, it follows that this classical channel is deterministic. Consequently, we find that $M_W$ defined by the dotted box of the following diagram satisfies the premise of Tomographic Closure \[   \input{figs/stat_closed_3a.tikz}.   \]
Since the generalised theory satisfies Tomographic Closure, it follows that for every $M,N$ the following diagram is a deterministic process \[   \input{figs/stat_closed_4a.tikz}.   \] As a result, up to isomorphism we can construct $S_W$ as denoted by the dotted box in the following picture, and see that indeed when applied to parts of arbitrary deterministic processes as in the following diagram, it returns a new deterministic process \[   \input{figs/stat_closed_5a.tikz}.   \] Consequently, $S_W$ is a supermap. 

For the contrary direction, note that for any process $E$ such that \[   \input{figs/stat_reverse_1.tikz}   \] is a deterministic classical channel, we can construct a candidate process matrix state $\rho_W$ via the following dotted box \[   \input{figs/stat_reverse_2.tikz}  \]. 
This state is indeed a two-party process matrix of type $(I \Rightarrow A') \otimes (B \Rightarrow I)$, since for any multiparty instrument  \[   \input{figs/stat_reverse_3.tikz}  \] we see that  \[   \input{figs/stat_reverse_4.tikz}  \]. Consequently, if process matrices are equal to supermaps then the process $E$ given by  \[   \input{figs/stat_reverse_5a.tikz} \ = \   \input{figs/stat_reverse_5b.tikz} \] is a supermap. Applying this supermap to the swap between the empty-space (monoidal unit) object and the input/output wires of the original process respectively, we find that \[  \input{figs/stat_reverse_5a.tikz} \ = \  \input{figs/stat_closed_6.tikz}  \] is by definition a deterministic process. This demonstrates that when process matrices are equal to supermaps, the Tomographic Closure condition must hold and so by the contrapositive, when Tomographic Closure fails, not every process matrix is a supermap.
\end{proof}

\end{document}

%% file: preamble.tex
\usepackage{amsmath}
\usepackage{amsthm}
\usepackage{amssymb}
\usepackage{physics}
\usepackage{stmaryrd}
\usepackage{tikz}
\usepackage{tikz-cd}
\usepackage{tikzit}
\usepackage{circuitikz}
\usepackage[numbers,sort&compress]{natbib}
\usepackage{hyperref}
\usepackage{tubes}
\usepackage{xsavebox}
\usepackage{cmll}

\input{styles.tikzdefs}
\input{styles.tikzstyles}
\usetikzlibrary{circuits.ee.IEC}

\newcommand{\tikzfigscale}[2]{\scalebox{#1}{\input{#2.tikz}}}

\newcommand{\cat}[1]{\mathcal{#1}}

\newcommand{\mat}[1]{\mathsf{Mat}_{#1}}
\newcommand{\opcat}[1]{#1^\textrm{op}}

\newcommand{\reals}{\mathbb{R}}

\newcommand{\stoch}{\mathsf{Stoch}}

\newcommand{\seq}{\olessthan}

\makeatletter
\newcommand{\xRightarrow}[2][]{\ext@arrow 0359\Rightarrowfill@{#1}{#2}}
\makeatother

%% file: styles.tikzstyles
\tikzstyle{discarding}=[fill=white, draw=black, shape=circle, style=upground]
\tikzstyle{smalldiscarding}=[fill=white, draw=black, style=upground, scale=0.75]
\tikzstyle{backdiscard}=[fill=white, draw=black, shape=circle, style=downground, scale=0.5]
\tikzstyle{smallbackdiscard}=[fill=white, draw=black, shape=circle, style=downground, scale=0.5]
\tikzstyle{state}=[fill=white, draw=black, style=triang, tikzit shape=rectangle]
\tikzstyle{kstate}=[fill=white, draw=black, style=kpoint, tikzit shape=rectangle]
\tikzstyle{kstateconj}=[fill=white, draw=black, style=kpoint conjugate, tikzit shape=rectangle]
\tikzstyle{kstateBIG}=[fill=white, draw=black, style=big kpoint, tikzit shape=rectangle]
\tikzstyle{effect}=[fill=white, draw=black, style=triangdag]
\tikzstyle{keffect}=[fill=white, draw=black, style=kpoint adjoint]
\tikzstyle{keffectconj}=[fill=white, draw=black, style=kpoint transpose]
\tikzstyle{morphdag}=[style=mapdag]
\tikzstyle{morph}=[style=hadamard]
\tikzstyle{WIDEmorph}=[style=hadamard, minimum width=14mm]
\tikzstyle{morphtrans}=[style=maptrans]
\tikzstyle{morphconj}=[style=mapconj]
\tikzstyle{CPMmorph}=[style=dmap]
\tikzstyle{CPMmorphconj}=[style=dmapconj]
\tikzstyle{CPMmorphdag}=[style=dmapdag]
\tikzstyle{CPMmorphtrans}=[style=dmaptrans]
\tikzstyle{CPMstate}=[fill=white, draw=black, style=triang, doubled]
\tikzstyle{CPMstateBIG}=[fill=white, draw=black, style={triang_lesssep}, doubled]
\tikzstyle{CPMkstate}=[fill=white, draw=black, style=kpoint, tikzit shape=rectangle, doubled]
\tikzstyle{CPMkstateconj}=[fill=white, draw=black, style=kpoint conjugate, tikzit shape=rectangle, doubled]
\tikzstyle{CPMkstateBIG}=[fill=white, draw=black, style=big kpoint, tikzit shape=rectangle, doubled]
\tikzstyle{CPMkeffect}=[fill=white, draw=black, style=kpoint adjoint, doubled]
\tikzstyle{CPMkeffectconj}=[fill=white, draw=black, style=kpoint transpose, doubled]
\tikzstyle{UHfB}=[fill=white, draw=black, style=triangdag, doubled, inner sep=-2pt]
\tikzstyle{leak}=[style=tinypoint, regular polygon rotate=-90]
\tikzstyle{leakfill}=[style=tinypoint, regular polygon rotate=-90, fill=black]
\tikzstyle{Z}=[style=dot, fill=green]
\tikzstyle{X}=[style=dot, fill=red]
\tikzstyle{black_dot}=[style=dot, fill=black]
\tikzstyle{white_dot}=[style=dot, fill=white]
\tikzstyle{qblack_dot}=[style=ddot, fill=black]
\tikzstyle{qwhite_dot}=[style=ddot, fill=white]
\tikzstyle{whitephase}=[style=wphase dot, fill=white]
\tikzstyle{qredphase}=[style=phase dot, fill=red]
\tikzstyle{qgreenphase}=[style=phase dot, fill=green]
\tikzstyle{had}=[style=hadamard, doubled]
\tikzstyle{box}=[style=hadamard]
\tikzstyle{bigbox}=[style=hadamard, minimum height=4mm, minimum width=8mm]
\tikzstyle{classhad}=[style=hadamard]
\tikzstyle{antipode}=[style=anti]

\tikzstyle{dottededge}=[-, dash pattern=on 1pt off 0.7pt]
\tikzstyle{double edge}=[-, style=doubled, draw=black, tikzit draw={rgb,255: red,18; green,168; blue,191}]
\tikzstyle{arrow}=[->]
\tikzstyle{new edge style 1}=[-, draw={rgb,255: red,242; green,233; blue,206}, fill={rgb,255: red,242; green,233; blue,206}]
\tikzstyle{morphism_shade}=[-, draw=black, fill={rgb,255: red,242; green,233; blue,206}, line join=bevel]
\tikzstyle{supermap_shade}=[-, fill={rgb,255: red,216; green,215; blue,242}, draw=black, line join=bevel]
\tikzstyle{hole_shade}=[-, fill=white, draw=black, line join=bevel]
\tikzstyle{new edge style 2}=[-, draw={rgb,255: red,14; green,188; blue,83}]
\tikzstyle{new edge style 3}=[<-, draw={rgb,255: red,234; green,209; blue,255}]
\tikzstyle{new edge style 4}=[<-, draw={rgb,255: red,255; green,0; blue,0}]
\tikzstyle{new edge style 5}=[-, draw={rgb,255: red,214; green,110; blue,62}]
\tikzstyle{new edge style 6}=[-, draw={rgb,255: red,174; green,20; blue,174}]
\tikzstyle{new edge style 0}=[-, fill=none, draw={rgb,255: red,255; green,0; blue,0}]
\tikzstyle{green_dotted}=[-, draw={rgb,255: red,0; green,106; blue,106}, dash pattern=on 1pt off 0.7pt]

%% file: figs/snake_2.tikz
\begin{tikzpicture}[scale=0.6, {every node/.style}={scale=0.8}]
	\begin{pgfonlayer}{nodelayer}
		\node [style=none] (2) at (-0.75, -1) {};
		\node [style=none] (3) at (0.75, 1) {};
	\end{pgfonlayer}
	\begin{pgfonlayer}{edgelayer}
		\draw [in=270, out=90] (2.center) to (3.center);
	\end{pgfonlayer}
\end{tikzpicture}